\documentclass[letterpaper, 10 pt,conference]{ieeeconf}  

\IEEEoverridecommandlockouts                 
\usepackage{booktabs}
\usepackage{multirow}
\usepackage[hidelinks]{hyperref}

\usepackage{mathrsfs,amsmath,amssymb,stmaryrd,amsfonts,mathtools}
\usepackage{multirow}
\usepackage[linesnumbered,ruled,vlined]{algorithm2e}
\usepackage{graphicx,algpseudocode}
\usepackage{color}
\usepackage{footnote}
\usepackage{color}
\usepackage{makecell}

\def\defas{\ensuremath{\mathrel{:=}}}

\def\Set#1#2{\ensuremath{
\left\{#1\,\middle|\,#2\right\}
}}

\def\dist{\mathrm{dist}}

\def\dha{\mathbf{d}_{\mathrm{H}}^{\mathrm{a}}}

\def\dh{\mathbf{d}_{\mathrm{H}}^{\mathrm{s}}}

\def\Comp{\mathcal{K}(\mathbb{R}^{n})}

\def\CompDOA{\mathcal{K}_{{\mathcal{D}}_{\mathcal{A}}}(\mathbb{R}^{n})}

\def\norm#1{\|  #1 \| }

\DeclarePairedDelimiter\abs{\lvert}{\rvert}

\DeclareMathDelimiter{\orbrack}{\mathopen}{operators}{"5D}{largesymbols}{"03}
\DeclareMathDelimiter{\clbrack}{\mathclose}{operators}{"5B}{largesymbols}{"02}
\def\intcc#1{\ensuremath{[#1]}}
\def\intoo#1{\ensuremath{\orbrack#1\clbrack}}
\def\intoc#1{\ensuremath{\orbrack#1]}}
\def\intco#1{\ensuremath{[#1\clbrack}}

\newtheorem{theorem}{Theorem}
\newtheorem{lemma}[theorem]{Lemma}

\newtheorem{corollary}{Corollary}

\newtheorem{assumption}{Assumption}

\title{\LARGE \bf
Set-Based Value Function Characterization and Neural Approximation of Stabilization Domains for Input-Constrained Discrete-Time Systems
}

\author{Mohamed Serry, S. Sivaranjani,  and Jun Liu
\thanks{M.~S. and J.~L. are with the Department of Applied Mathematics, University of Waterloo, Waterloo, Ontario, Canada.  
The work of M.~S. and J.~L. was funded in part by the Natural Sciences and Engineering Research Council of Canada  (e-mail: \{mserry,~j.liu\}@uwaterloo.ca). M.~S. is also with the Department of Mechanical and Mechatronics Engineering at the University of Waterloo.}
 \thanks{S.~S. is with the School of Industrial Engineering, Purdue University, 15 N Grant Street, West Lafayette, 47907, Indiana. S.~S. was supported in part by the Air Force Office of Scientific Research grant FA9550-23-1-0492.}}%

\begin{document}

\maketitle
\thispagestyle{empty}
\pagestyle{empty}

\begin{abstract}
Analyzing nonlinear systems with stabilizable controlled invariant sets (CISs) requires accurate estimation of their domains of stabilization (DOS) together with associated stabilizing controllers. Despite extensive research, estimating DOSs for general nonlinear systems remains challenging due to fundamental theoretical and computational limitations.  In this paper, we propose a novel framework for estimating DOSs for controlled input-constrained  discrete-time systems. The DOS is characterized via newly introduced value functions defined on metric spaces of compact sets. We establish the fundamental properties of these value functions and derive the associated Bellman-type (Zubov-type) functional equations. Building on this characterization, we develop a physics-informed neural network (NN) framework that learns the value functions by embedding the derived functional equations directly into the training process. The proposed methodology is demonstrated through two numerical examples, illustrating its ability to accurately estimate DOSs and synthesize stabilizing controllers from the learned value functions.
\end{abstract}
\begin{keywords} Controlled nonlinear  systems, value functions, Bellman-type equations, neural networks
\end{keywords}

\section{Introduction}
A fundamental problem in control theory is the design of stabilizing controllers for nonlinear systems where the goal is to construct a feedback law, that is, a mapping from the state to the control input, such that the resulting closed-loop system is asymptotically stable with respect to a desired equilibrium point or a target set of interest. Closely related to this objective is the characterization of the set of initial conditions that can be driven asymptotically to the target set, commonly referred to as the domain of stabilization (DOS). An overview of DOS estimation for continuous-time systems can be found in, e.g., \cite{grune2000computing, liu2025formal}.

In this work, we study DOSs for controlled discrete-time nonlinear systems. The stabilization of such systems has been extensively investigated, with a wide range of approaches including linearization-based methods \cite{weiss1972controllability}, control Lyapunov functions \cite{kellett2004discrete, wu2024control}, sliding mode control \cite{koshkouei2000sliding, jakubczyk1987feedback}, feedback linearization \cite{aranda1996linearization}, and model predictive control \cite{grune2016nonlinear}. While these methods have demonstrated effectiveness across a broad range of applications, a fundamental challenge remains: the resulting controllers do not, in general, guarantee that all initial states within the DOS are driven asymptotically to the target set or equilibrium point. In particular, stabilization guarantees are often local, and the domain of attraction of the closed-loop system does not necessarily coincide with the DOS, especially in the presence of input constraints. As a result, the domain of attraction of the implemented controller may be strictly smaller than the true DOS.

An alternative and well-established framework for studying stabilization is based on value functions. In this approach, a running cost is introduced to penalize deviations from the desired behavior, and the associated value function is defined as the optimal accumulated cost. This function satisfies a Bellman-type equation \cite{grune2007approximately}, which can, in principle, be solved via dynamic programming. This framework provides a principled way to construct stabilizing controllers and to characterize the DOS exactly as sublevel sets of the value function. However, its practical implementation typically requires state-space discretization, leading to exponential growth in computational complexity.

Recently, learning-based approaches have been proposed for synthesizing neural controllers and estimating domains of attraction of the resulting closed-loop systems \cite{dai2021lyapunov, wu2023neural, chang2019neural,zhou2022neural}. These methods typically enforce Lyapunov-type decrease conditions during training. While effective in practice, they generally do not guarantee maximality of the resulting DOS estimates. Certification of learned neural Lyapunov functions and controllers is commonly performed using formal verification tools, such as those based on interval arithmetic or mixed-integer programming \cite{wu2023neural,shi2024certified,gao2013dreal}. Although neural network verification remains computationally demanding, recent advances in linear bound propagation and branch-and-bound techniques have significantly improved scalability \cite{bunel2020branch, xu2021fast, wang2021beta, ferrari2022complete, zhou2024scalable}. These developments have motivated ongoing efforts toward obtaining certifiable neural Lyapunov functions and controllers valid over large regions of the state space.

A natural direction is to combine the theoretical guarantees of value-function-based methods with the scalability of learning-based approaches, with the goal of learning maximal Lyapunov functions. Such ideas have been explored for uncontrolled systems (see, e.g., \cite{serry2025safe, serry2026safe}). However, extending these approaches to controlled systems introduces several challenges. First, evaluating value functions requires solving infinite-horizon optimal control problems, which are generally intractable; even finite-horizon approximations lead to large-scale nonconvex optimization problems. Second, the Bellman equations involve infimum operators over the control inputs, which complicates their direct incorporation into physics-informed learning frameworks, i.e., learning frameworks that embed governing equations into the loss functions. Physics-informed learning is particularly attractive in control applications, as it incorporates system dynamics and stability conditions directly into the training process, thereby improving the approximation quality of the learned neural functions compared to purely data-driven approaches 
\cite{liu2025physics,liu2025formally}.

Motivated by these challenges, we develop a novel framework for estimating DOSs associated with prescribed  CISs for discrete-time nonlinear systems, together with corresponding neural network controllers. The proposed approach integrates two key components:
(i) set-based value functions that characterize the DOS,
and (ii) neural network approximations to represent these functions.

Specifically, we introduce value functions defined on metric spaces of compact sets that characterize the DOS while enabling tractable evaluation via reachable-set approximations. These value functions satisfy Bellman-type functional equations without explicit infimum operators, allowing practical evaluation of residual errors through finite-dimensional embeddings of set representations. This structure facilitates physics-informed learning of the value functions. 

The remainder of the paper is organized as follows. Section~\ref{sec:Preliminaries} introduces preliminaries and notation. Section~\ref{sec:ProblemFormulation} presents the problem formulation. Section~\ref{sec:ValueFunctions} introduces the proposed value functions and their theoretical properties. Section~\ref{sec:NumericalExamples} presents numerical examples demonstrating the potential  of our framework in DOS estimation and control synthesis, and Section~\ref{sec:Conclusion} concludes the paper.

\section{Notation and preliminaries}
\label{sec:Preliminaries}
Let $\mathbb{R}$, $\mathbb{R}_+$, $\mathbb{Z}$, and $\mathbb{Z}_{+}$ denote the sets of real numbers, nonnegative real numbers, integers, and nonnegative integers, respectively, and define $\mathbb{N} := \mathbb{Z}_{+} \setminus \{0\}$. 
For $a,b \in \mathbb{R}$ with $a \le b$, let $\intcc{a,b}$, $\intoo{a,b}$, $\intco{a,b}$, and $\intoc{a,b}$ denote the closed, open, and half-open intervals in $\mathbb{R}$ with endpoints $a$ and $b$. 
Their discrete counterparts are denoted by $\intcc{a;b}$, $\intoo{a;b}$, $\intco{a;b}$, and $\intoc{a;b}$, respectively.  Let $\norm{\cdot}$ denote any vector norm on $\mathbb{R}^{n}$ and $\mathbb{B}_{n}$ be the $n$-dimensional closed unit ball induced by $\norm{\cdot}$.   The Minkowski sum of $X,Y\subseteq \mathbb{R}^{n}$ is defined as
$
X+Y\defas \Set{x+y}{x\in X,~y\in Y}.
$
The class of nonempty compact subsets of $X\subseteq \mathbb{R}^{n}$ is denoted by $\mathcal{K}(X)$. Given a point $x\in \mathbb{R}^{n}$ and a set $\Omega\subseteq \mathbb{R}^{n}$, define the distance from $x$ to $\Omega$ as $\dist(x,\Omega)\defas \inf_{y\in \Omega}\norm{x-y}.
$
This distance function is used to define both the asymmetric and symmetric Hausdorff distances, denoted by $\dha$ and $\dh$, respectively (see, e.g., \cite{serry2026safe,serry2021overapproximating}).
Note that, for $X\subseteq \mathbb{R}^{n}$, $\mathcal{K}(X)$, equipped with the Hausdorff distance  $\dh$, is a metric space. 
Given $f\colon X\rightarrow Y$ and  $P\subseteq X$, the image  of $f$ on $P$ is defined as  $f(P)\defas\Set{f(x)}{x\in P}$.  Given $f\colon X \rightarrow X$ and  $x\in X$,  $f^{0}(x)\defas x$, and for  $M\in \mathbb{N}$, we define $f^{M}(x)$ recursively as follows:   $f^{k}(x)=f(f^{k-1}(x)),~k\in \intcc{1;M}.
$
Given two sets $X$ and $Y$, ${Y}^{X}$ denotes the set of all maps $f\colon X\rightarrow Y$.

\section{Problem formulation}\label{sec:ProblemFormulation}

Consider the discrete-time   system 
\begin{equation}\label{eq:System}
x_{k+1}=f(x_{k},u_{k}),~k\in \mathbb{Z}_{+},
\end{equation}
where  $x_{k}\in \mathbb{R}^{n}$ is the system state, $u_{k}\in \mathbb{R}^{m}$ is the control input, satisfying $
    u_{k}\in U,~k\in \mathbb{Z}_{+},
    $
    where   $U\in \mathcal{K}(\mathbb{R}^{m})$ is a known input set, and  $f\colon \mathbb{R}^{n}\times \mathbb{R}^{m}\rightarrow \mathbb{R}^{n}$ is a continuous function over $\mathbb{R}^{n}\times \mathbb{R}^{m}$ specifying the dynamics of the system. Using the function $f$, we define the map  $F\colon \mathcal{K}(\mathbb{R}^{n})\rightarrow \mathcal{K}(\mathbb{R}^{n})$ as follows:
$$
F(X)\defas f(X,U)=\bigcup_{(x,u)\in X\times U}f(x,u),~X\in \mathcal{K}(\mathbb{R}^{n}).
$$
Note that
    $F$ is continuous w.r.t. Hausdorff distance $\dh$.
    
    A trajectory $\varphi_{x}^{\pi}\colon\mathbb{Z}_{+}\rightarrow \mathbb{R}^{n}$ of system \eqref{eq:System}, induced by an initial condition $x\in \mathbb{R}^{n}$ and an input signal $\pi \colon \mathbb{Z}_{+}\rightarrow U$, satisfies
 $$
 \varphi_{x}^{\pi}(0)=x,~
    \varphi_{x}^{\pi}(k+1)=f(\varphi_{x}^{\pi}(k),\pi(k)),~k\in \mathbb{Z}_{+}.
    $$
The  reachable set of $X\subseteq \mathbb{R}^{n}$ at time $k\in \mathbb{Z}_{+}$, under the dynamics of \eqref{eq:System}, is defined as:
    $$
\mathcal{R}(X,k)\defas\{y\in \mathbb{R}^{n}| \exists (x,\pi) \in X\times U^{\mathbb{Z}_{+}} ~\text{s.t.}~ y=\varphi_{x}^{\pi}(k)\}.
    $$   
The following properties of reachable sets are essential in our analysis in this work:
\begin{lemma}\label{lem:SemiGroup} For all $(X,k)\in \mathcal{K}(\mathbb{R}^{n})\times\mathbb{N},$
    $$
    \mathcal{R}(X,k)=F(\mathcal{R}(X,k-1))=\mathcal{R}(F(X),k-1)=F^{k}(X).
    $$
Additionally, for $X\in \Comp$, $\mathcal{R}(X,k)\in \Comp$  for all $k\in \mathbb{Z}_{+}$. Moreover, for fixed $k\in \mathbb{Z}_{+}$, $\mathcal{R}(\cdot,k)\colon \Comp\rightarrow \Comp$ is continuous w.r.t. the Hausdorff distance $\dh$.
\end{lemma}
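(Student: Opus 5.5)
The plan is to prove the set identities first, by a direct unwinding of the definitions of $\varphi_x^\pi$ and $\mathcal{R}$, and then derive compactness and continuity from the facts, stated above, that $F$ maps $\Comp$ into itself and is continuous w.r.t.\ $\dh$.

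\textbf{Step 1: the base identity and $\mathcal{R}(X,k)=F(\mathcal{R}(X,k-1))$.} Since $\varphi_x^\pi(0)=x$, we have $\mathcal{R}(X,0)=X$. Now fix $k\in\mathbb{N}$.
For ``$\subseteq$'': take $y=\varphi_x^\pi(k)$. Then $y=f(\varphi_x^\pi(k-1),\pi(k-1))$ with $\varphi_x^\pi(k-1)\in\mathcal{R}(X,k-1)$ and $\pi(k-1)\in U$, so $y\in F(\mathcal{R}(X,k-1))$.
For ``$\supseteq$'': take $y=f(z,u)$ with $z=\varphi_x^\pi(k-1)$ and $u\in U$. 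Define $\tilde\pi$ to agree with $\pi$ except $\tilde\pi(k-1)=u$. The trajectory up to time $k-1$ depends only on $\pi(0),\dots,\pi(k-2)$, so $\varphi_x^{\tilde\pi}(k-1)=z$ and hence $\varphi_x^{\tilde\pi}(k)=y$.
Induction on $k$ then gives $\mathcal{R}(X,k)=F^k(X)$.

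\textbf{Step 2: the middle identity.} We want $\mathcal{R}(X,k)=\mathcal{R}(F(X),k-1)$. This follows from Step 1 applied to $F(X)$, which gives $\mathcal{R}(F(X),k-1)=F^{k-1}(F(X))=F^k(X)$. Applying Step 1 to $F(X)$ requires $F(X)$ to be compact.

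\textbf{Step 3: compactness.} If $X$ is compact, then $X\times U$ is compact, and since $f$ is continuous, $F(X)=f(X\times U)$ is compact and nonempty. By induction, $F^k(X)\in\Comp$, which gives compactness of $\mathcal{R}(X,k)$ via Step 1. Strictly speaking, the identities in Steps 1 and 2 are stated for $X\in\Comp$, so this order of reasoning is consistent: compactness of $F(X)$ is available when Step 2 uses it.

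\textbf{Step 4: continuity.} $\mathcal{R}(\cdot,k)=F^k$ is a composition of continuous maps, so it is continuous. This relies on the continuity of $F$ w.r.t.\ $\dh$, which the paper notes just before the lemma.

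\textbf{Main obstacle: continuity of $F$, if it had to be proved.} The standard argument goes as follows.
Fix $X$ and $\varepsilon>0$. The set $K=(X+\mathbb{B}_n)\times U$ is compact, so $f$ is uniformly continuous on $K$. Choose $\delta<1$ such that $\norm{x-x'}\le\delta$ implies $\norm{f(x,u)-f(x',u)}\le\varepsilon$ for all $(x,u),(x',u)\in K$.
Now suppose $\dh(X,X')\le\delta$. Then every $f(x',u)$ with $x'\in X'$ lies within $\varepsilon$ of $f(x,u)$ for some $x\in X$ with $\norm{x-x'}\le\delta$, and symmetrically with the roles of $X$ and $X'$ exchanged. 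Hence $\dh(F(X),F(X'))\le\varepsilon$.
The only subtle point is the trajectory-modification argument in Step 1's reverse inclusion, and that is routine.
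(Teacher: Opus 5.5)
Your proof is correct and complete. The paper states this lemma without proof, and it only asserts, just before the lemma, that $F$ is continuous w.r.t.\ the Hausdorff distance, so there is no competing argument to compare against. Your route is the standard reasoning the paper takes for granted: unwind the trajectory definition using a one-step modification of the input signal, get compactness because $f(X\times U)$ is the continuous image of a compact set, and get continuity because $\mathcal{R}(\cdot,k)=F^{k}$ is a composition of continuous maps. Your uniform-continuity argument on $(X+\mathbb{B}_n)\times U$ also supplies the one ingredient the paper leaves unjustified.

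One cosmetic point: $F$ is defined only on $\mathcal{K}(\mathbb{R}^{n})$, so the identity in Step 1 is cleanest as a joint induction with the compactness claim of Step 3. Alternatively, read $F(Y)=f(Y,U)$ for arbitrary $Y$. Your remark on the order of reasoning essentially covers this.
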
 
Let 
$\mathcal{A} \in \mathcal{K}(\mathbb{R}^{n})$ be a controlled invariant set (CIS), that is,
$$
x\in \mathcal{A}\Rightarrow ~\exists u\in U~\text{s.t.}~ f(x,u)\in \mathcal{A}.
$$

We impose the following stabilizability assumption, which covers both exponential and polynomial stabilizability as special cases.

\begin{assumption}[Local $\ell_{p}$-stabilizability]\label{Assumptions}
The set $\mathcal{A}$ is locally $\ell_p$-stabilizable. That is, there exist constants $r>0$, $M\ge 1$, $p>0$, and a function 
$\lambda:[0,r]\times \mathbb{Z}_{+}\to \mathbb{R}_{+}$ such that:
\begin{enumerate}
\item For each $k\in \mathbb{Z}_{+}$, $s\mapsto \lambda(s,k)$ is continuous, nondecreasing on $[0,r]$, and $\lambda(0,k)=0$. For each $s\in [0,r]$, $k\mapsto \lambda(s,k)$ is nonincreasing and $\lambda(s,0)\le s$;

\item $\sum_{k=0}^{\infty} \lambda(r,k)^p < \infty$;

\item For every $x\in \mathbb{R}^n$ with $\dist(x,\mathcal{A})\le r$, there exists $\pi\in U^{\mathbb{Z}_{+}}$ such that
\[
\dist\!\bigl(\varphi_x^\pi(k),\mathcal{A}\bigr)
\le M\,\lambda\!\bigl(\dist(x,\mathcal{A}),k\bigr),
\quad \forall k\in \mathbb{Z}_{+}.
\]
\end{enumerate}
\end{assumption}

The DOS w.r.t. $\mathcal{A}$,
$\mathcal{D}_{\mathcal{A}}\subseteq \mathbb{R}^{n}$, is defined as
$$
\mathcal{D}_{\mathcal{A}}\defas \left\{x\in \mathbb{R}^{n} \middle\vert \begin{array}{c}\exists \pi\in U^{\mathbb{Z}_{+}}~\text{s.t.}\\
      \lim_{k\rightarrow \infty} \dist( \varphi_{x}^{\pi}(k),\mathcal{A})=0\end{array}
\right\}.
$$

 Our objective is to compute an accurate approximation of $\mathcal{D}_{\mathcal{A}}$. Note that $\mathcal{D}_{\mathcal{A}}$ is open and is given equivalently as: 
\[
\mathcal{D}_{\mathcal{A}}
=
\Set{x\in \mathbb{R}^{n}}{\lim_{k\to\infty}\inf_{y\in \mathcal{R}(\{x\},k)} \dist(y,\mathcal{A}) = 0}.
\]

\section{Value Functions}
\label{sec:ValueFunctions}

In this section, we introduce novel value functions defined on metric spaces of compact sets that characterize the DOS.

To this end, let $\alpha:\mathbb{R}^n \to \mathbb{R}_+$ be a continuous function satisfying
\begin{equation}\label{eq:AlphaBounds}
\underline{\alpha}\,\dist(x,\mathcal{A})^{\bar p}
\;\le\;
\alpha(x)
\;\le\;
\overline{\alpha}\,\dist(x,\mathcal{A})^{\bar p},
\quad x\in \mathbb{R}^n,
\end{equation}
for some constants $\underline{\alpha},\overline{\alpha}>0$ and $\bar p\ge p$. Such a function always exists; for instance, one may choose $\alpha(x)=\dist(x,\mathcal{A})^{p}$.

We define the value functions $\mathcal{V},\mathcal{W}:\Comp \to \mathbb{R}_+\cup\{\infty\}$ by
\begin{align}
\mathcal{V}(X) &\defas \sum_{k=0}^{\infty} \Psi(\mathcal{R}(X,k)), \label{eq:Lyapunov1}\\
\mathcal{W}(X) &\defas 1 - \exp\big(-\mathcal{V}(X)\big), \label{eq:Lyapunov2}
\end{align}
where $\Psi:\Comp \to \mathbb{R}_+$ is given by
\begin{equation}\label{eq:Psi}
\Psi(X)\defas \inf_{y\in X} \alpha(y), \quad X\in \Comp,
\end{equation}
with the convention $\exp(-\infty)=0$.

The value functions $\mathcal{V}$ and $\mathcal{W}$ provide a characterization of the DOS, as shown below.
\begin{theorem}\label{thm:DOASublevel}
Define
$$
\mathbb{V}_{\infty} \defas \{x\in \mathbb{R}^{n} \mid \mathcal{V}(\{x\})<\infty\}, 
$$
and
$$
\mathbb{W}_{1} \defas \{x\in \mathbb{R}^{n} \mid \mathcal{W}(\{x\})<1\}.
$$
Then, $\mathbb{V}_{\infty} = \mathbb{W}_{1} = \mathcal{D}_{\mathcal{A}}.
$
\end{theorem}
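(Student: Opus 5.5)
The plan is to show $\mathbb{V}_{\infty}=\mathcal{D}_{\mathcal{A}}$ directly and then note that $\mathbb{W}_{1}=\mathbb{V}_{\infty}$. The second identity is immediate: $\mathcal{W}(\{x\})=1-\exp(-\mathcal{V}(\{x\}))<1$ iff $\exp(-\mathcal{V}(\{x\}))>0$ iff $\mathcal{V}(\{x\})<\infty$, using the convention $\exp(-\infty)=0$. For the first identity I will prove both inclusions. Throughout I use that, by \eqref{eq:AlphaBounds} and monotonicity of $s\mapsto s^{\bar p}$ on $\mathbb{R}_+$,
\[
\underline{\alpha}\Bigl(\inf_{y\in\mathcal{R}(\{x\},k)}\dist(y,\mathcal{A})\Bigr)^{\bar p}\le \Psi(\mathcal{R}(\{x\},k))\le \overline{\alpha}\,\dist(z,\mathcal{A})^{\bar p}
\]
for every $z\in\mathcal{R}(\{x\},k)$. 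Here $\mathcal{R}(\{x\},k)$ is compact and nonempty by Lemma~\ref{lem:SemiGroup}.

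\emph{Inclusion $\mathbb{V}_{\infty}\subseteq\mathcal{D}_{\mathcal{A}}$.} If $\mathcal{V}(\{x\})<\infty$, the summands $\Psi(\mathcal{R}(\{x\},k))$ tend to $0$. By the lower bound above, $\inf_{y\in\mathcal{R}(\{x\},k)}\dist(y,\mathcal{A})\to 0$. The equivalent characterization of $\mathcal{D}_{\mathcal{A}}$ stated in Section~\ref{sec:ProblemFormulation} then gives $x\in\mathcal{D}_{\mathcal{A}}$.

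\emph{Inclusion $\mathcal{D}_{\mathcal{A}}\subseteq\mathbb{V}_{\infty}$.} This is the main step. Take $x\in\mathcal{D}_{\mathcal{A}}$ with an input $\pi$ such that $\dist(\varphi_x^\pi(k),\mathcal{A})\to0$, and choose $K$ with $\dist(\varphi_x^\pi(K),\mathcal{A})\le r$. Let $y=\varphi_x^\pi(K)$, and let $\pi'$ be the input from Assumption~\ref{Assumptions}(3) for $y$. Concatenate the two inputs: let $\hat\pi(k)=\pi(k)$ for $k<K$ and $\hat\pi(k)=\pi'(k-K)$ for $k\ge K$. Then $z_k:=\varphi_x^{\hat\pi}(k)\in\mathcal{R}(\{x\},k)$ for all $k$.

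For $k<K$ the summands are finite, since $\alpha$ is finite-valued, so they contribute a finite amount. For $k=K+j$, monotonicity of $\lambda$ in $s$ gives
\[
\Psi(\mathcal{R}(\{x\},K+j))\le\overline{\alpha}\,\dist(z_{K+j},\mathcal{A})^{\bar p}\le \overline{\alpha}M^{\bar p}\lambda(\dist(y,\mathcal{A}),j)^{\bar p}\le\overline{\alpha}M^{\bar p}\lambda(r,j)^{\bar p}.
\]

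It remains to pass from $\ell_p$ to $\ell_{\bar p}$ summability. By Assumption~\ref{Assumptions}(2), $\lambda(r,j)\to0$, so there is $J$ with $\lambda(r,j)\le1$ for all $j\ge J$. Since $\bar p\ge p$, this gives $\lambda(r,j)^{\bar p}\le\lambda(r,j)^p$ for $j\ge J$, and hence $\sum_j\lambda(r,j)^{\bar p}<\infty$. Therefore $\mathcal{V}(\{x\})<\infty$.

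The only delicate points are constructing an element of $\mathcal{R}(\{x\},k)$ via the concatenated input, and this $p$-versus-$\bar p$ comparison; everything else is direct from the definitions.
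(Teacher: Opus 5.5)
Your proof is correct and follows the paper's proof essentially step for step. It identifies $\mathbb{W}_{1}=\mathbb{V}_{\infty}$ trivially, uses the lower bound on $\alpha$ for $\mathbb{V}_{\infty}\subseteq\mathcal{D}_{\mathcal{A}}$, and concatenates a converging input with the stabilizing input of Assumption~\ref{Assumptions} together with the upper bound on $\alpha$ for the reverse inclusion; your explicit $\ell_{p}$-to-$\ell_{\bar p}$ comparison fills in a step the paper leaves implicit.

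The one difference is that you close the first inclusion by citing the equivalent characterization of $\mathcal{D}_{\mathcal{A}}$, which Section~\ref{sec:ProblemFormulation} states without proof. The paper instead proves exactly its nontrivial direction inline (reach the $r$-neighborhood of $\mathcal{A}$ at some finite time $K$, then switch to the stabilizing input). That is where Assumption~\ref{Assumptions} is actually needed, so adding those two lines would make your argument self-contained.
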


\begin{proof}
Since $\mathcal{W}(\{x\}) < 1$ if and only if $\mathcal{V}(\{x\}) < \infty$ for all $x \in \mathbb{R}^n$, it suffices to prove that
$
\mathbb{V}_{\infty} = \mathcal{D}_{\mathcal{A}}$. Recall the definitions of $M$, $r$, and $\lambda$ from Assumption \ref{Assumptions}.

\noindent\textbf{($\mathbb{V}_{\infty} \subseteq \mathcal{D}_{\mathcal{A}}$)}
Let $x \in \mathbb{V}_{\infty}$. Then the series
$
\sum_{k=0}^{\infty} \Psi\big(\mathcal{R}(\{x\},k)\big)
$
is convergent. By the definition of $\Psi$, this implies
$
\lim_{k\to\infty} \inf_{y \in \mathcal{R}(\{x\},k)} \alpha(y) = 0.
$
Using the lower bound on $\alpha$  and the squeeze theorem, we obtain
$\lim_{k\to\infty} \inf_{y \in \mathcal{R}(\{x\},k)} \dist(y,\mathcal{A})^{\bar{p}} = 0.
$
Hence, there exist an input signal $\pi_1 \in U^{\mathbb{Z}_{+}}$ and a time $K \in \mathbb{Z}_{+}$ such that
$\dist(\varphi_{x}^{\pi_1}(K),\mathcal{A}) \le r$.

By $\ell_p$-stabilizability (Assumption~\ref{Assumptions}), there exists an input signal $\pi_2 \in U^{\mathbb{Z}_{+}}$ such that

$\dist\big(\varphi_{\varphi_{x}^{\pi_1}(K)}^{\pi_2}(k),\mathcal{A}\big)
\le M\,\lambda\big(\dist(\varphi_{x}^{\pi_1}(K),\mathcal{A}),k\big)$,  $k \in \mathbb{Z}_{+}$. Define $\pi \in U^{\mathbb{Z}_{+}}$ by concatenation:
\[
\pi(k) =
\begin{cases}
\pi_1(k), & k < K,\\
\pi_2(k-K), & k \ge K.
\end{cases}
\]
Then $\dist(\varphi_{x}^{\pi}(k),\mathcal{A}) \to 0$ as $k \to \infty$, and thus $x \in \mathcal{D}_{\mathcal{A}}$.

\medskip
\noindent\textbf{($\mathcal{D}_{\mathcal{A}} \subseteq \mathbb{V}_{\infty}$)}
Let $x \in \mathcal{D}_{\mathcal{A}}$. Then there exists $\pi_1 \in U^{\mathbb{Z}_{+}}$ such that
$ \lim_{k\rightarrow \infty}
\dist(\varphi_{x}^{\pi_1}(k),\mathcal{A}) = 0$. Hence, there exists $K \in \mathbb{Z}_{+}$ such that
$
\dist(\varphi_{x}^{\pi_1}(K),\mathcal{A}) \le r.
$
By $\ell_p$-stabilizability, there exists $\pi_2 \in U^{\mathbb{Z}_{+}}$ such that $\dist\big(\varphi_{\varphi_{x}^{\pi_1}(K)}^{\pi_2}(k),\mathcal{A}\big)
\le M\,\lambda\big(\dist(\varphi_{x}^{\pi_1}(K),\mathcal{A}),k\big)$,  $k \in \mathbb{Z}_{+}$. Define $\pi$ as above. Then, using the definition of $\Psi$ and the bounds on $\alpha$, we obtain
$\sum_{k=K}^{\infty} \Psi\big(\mathcal{R}(\{x\},k)\big)
\le \sum_{k=K}^{\infty} \alpha(\varphi_{x}^{\pi}(k))
\le \overline{\alpha} \sum_{k=K}^{\infty} \dist(\varphi_{x}^{\pi}(k),\mathcal{A})^{\bar p} \le \overline{\alpha} M^{\bar p} \sum_{k=0}^{\infty} \lambda\big(\dist(\varphi_{x}^{\pi_1}(K),\mathcal{A}),k\big)^{\bar p}< \infty
$. Therefore, $\mathcal{V}(\{x\}) < \infty$, and hence $x \in \mathbb{V}_{\infty}$. Combining both inclusions completes the proof.
\end{proof}



\subsection{Properties of the Value Functions}
\label{sec:VFProperties}

In this section, we establish basic properties of the value functions $\mathcal{V}$ and $\mathcal{W}$. 

To this end, we define:
\[
\CompDOA \defas \Set{X \in \Comp}{X \cap \mathcal{D}_{\mathcal{A}} \neq \emptyset}.
\]

\begin{lemma}[Positive definiteness]
The functions $\mathcal{V}$ and $\mathcal{W}$ are positive definite in the sense that
$\mathcal{V}(X),\,\mathcal{W}(X) > 0$  for all $X \in \Comp$ with  $X \cap \mathcal{A} = \emptyset$,
and $\mathcal{V}(X) = \mathcal{W}(X) = 0$ whenever  $X \cap \mathcal{A} \neq \emptyset$.
\end{lemma}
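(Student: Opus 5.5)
We will prove the two claims separately, working directly from the definitions \eqref{eq:Lyapunov1}--\eqref{eq:Psi}. Since $\mathcal{W}(X)=1-\exp(-\mathcal{V}(X))$ and $t\mapsto 1-e^{-t}$ is strictly increasing on $\mathbb{R}_+\cup\{\infty\}$ with value $0$ exactly at $t=0$ (and value $1$ at $\infty$ by convention), it suffices to show that $\mathcal{V}(X)>0$ when $X\cap\mathcal{A}=\emptyset$ and $\mathcal{V}(X)=0$ when $X\cap\mathcal{A}\neq\emptyset$.

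\emph{Positivity.} Let $X\in\Comp$ with $X\cap\mathcal{A}=\emptyset$. Every term of the series in \eqref{eq:Lyapunov1} is nonnegative, so $\mathcal{V}(X)\ge \Psi(\mathcal{R}(X,0))=\Psi(X)=\inf_{y\in X}\alpha(y)$, using $\mathcal{R}(X,0)=X$. The map $y\mapsto\dist(y,\mathcal{A})$ is continuous. Because $X$ is compact, it attains its minimum on $X$ at some $y^\ast\in X$. Because $\mathcal{A}$ is compact and hence closed, and $y^\ast\notin\mathcal{A}$, we have $\dist(y^\ast,\mathcal{A})>0$. The lower bound in \eqref{eq:AlphaBounds} then gives
\[
\Psi(X)\ge \underline{\alpha}\,\min_{y\in X}\dist(y,\mathcal{A})^{\bar p}=\underline{\alpha}\,\dist(y^\ast,\mathcal{A})^{\bar p}>0 .
\]
Hence $\mathcal{V}(X)>0$, and therefore $\mathcal{W}(X)>0$. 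This case also covers $\mathcal{V}(X)=\infty$, where $\mathcal{W}(X)=1>0$.

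\emph{Vanishing.} Let $X\cap\mathcal{A}\neq\emptyset$ and pick $x_0\in X\cap\mathcal{A}$. We use controlled invariance of $\mathcal{A}$ to choose inputs $u_k\in U$ recursively so that $x_{k+1}=f(x_k,u_k)\in\mathcal{A}$ for all $k$. Setting $\pi(k)=u_k$ gives a trajectory $\varphi_{x_0}^{\pi}(k)\in\mathcal{A}\cap\mathcal{R}(X,k)$ for every $k\in\mathbb{Z}_+$. By the upper bound in \eqref{eq:AlphaBounds}, $0\le\alpha(\varphi_{x_0}^{\pi}(k))\le\overline{\alpha}\cdot 0=0$, so $\Psi(\mathcal{R}(X,k))=0$ for every $k$. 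Summing the terms yields $\mathcal{V}(X)=0$, and hence $\mathcal{W}(X)=1-e^{0}=0$.

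Neither step presents a real obstacle. The only points needing care are the compactness argument, which ensures the infimum in $\Psi$ is strictly positive rather than merely positive pointwise, and the recursive (countable) choice of inputs in the invariance step.
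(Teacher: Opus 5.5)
Your proof is correct and takes the same route as the paper, whose proof simply says the result follows directly from the definitions of $\Psi$, $\mathcal{V}$, and $\mathcal{W}$. You make explicit the two ingredients that one-line argument leaves implicit: compactness of $X$ together with closedness of $\mathcal{A}$ to get $\Psi(X)>0$, and controlled invariance of $\mathcal{A}$ to ensure $\Psi(\mathcal{R}(X,k))=0$ for every $k$.
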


\begin{proof}
The result follows directly from the definitions of $\Psi$, $\mathcal{V}$, and $\mathcal{W}$.
\end{proof}

\begin{lemma}[Monotonicity]\label{lem:Monotonicity}
The functions $\mathcal{V}$ and $\mathcal{W}$ are monotone with respect to set inclusion, i.e., for $ X,Y \in \Comp$,
$X \subseteq Y \;\Rightarrow\; \mathcal{V}(X) \ge \mathcal{V}(Y), 
\quad \mathcal{W}(X) \ge \mathcal{W}(Y)$.

\end{lemma}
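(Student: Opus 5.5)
The plan is to prove monotonicity of $\mathcal{V}$ term by term, and then transfer it to $\mathcal{W}$ through the monotone map $s\mapsto 1-\exp(-s)$.

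\textbf{Reachable sets preserve inclusion.} Fix $X,Y\in\Comp$ with $X\subseteq Y$. We first show that $\mathcal{R}(X,k)\subseteq \mathcal{R}(Y,k)$ for every $k\in\mathbb{Z}_{+}$. This is immediate from the definition of the reachable set. If $y\in\mathcal{R}(X,k)$, there is a pair $(x,\pi)\in X\times U^{\mathbb{Z}_{+}}$ with $y=\varphi_x^\pi(k)$. Since $x\in Y$, the same pair witnesses $y\in\mathcal{R}(Y,k)$. Alternatively, one can use Lemma~\ref{lem:SemiGroup}, since $F(X)=f(X,U)$ is clearly inclusion-preserving and hence so is $F^k$.

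\textbf{$\Psi$ is antitone.} Next we show that $\Psi$ reverses inclusion. For $A\subseteq B$ in $\Comp$ we have
\[
\Psi(A)=\inf_{y\in A}\alpha(y)\ge \inf_{y\in B}\alpha(y)=\Psi(B),
\]
because an infimum over a larger set can only be smaller. Applying this with $A=\mathcal{R}(X,k)$ and $B=\mathcal{R}(Y,k)$, both of which lie in $\Comp$ by Lemma~\ref{lem:SemiGroup}, gives $\Psi(\mathcal{R}(X,k))\ge\Psi(\mathcal{R}(Y,k))$ for all $k$.

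\textbf{Summing the terms.} All terms are nonnegative, so we can sum over $k$ in the extended reals $\mathbb{R}_+\cup\{\infty\}$. This yields $\mathcal{V}(X)\ge\mathcal{V}(Y)$. The inequality remains valid when either series diverges: if $\mathcal{V}(Y)=\infty$, then every partial sum for $X$ dominates the corresponding partial sum for $Y$, so $\mathcal{V}(X)=\infty$ as well.

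\textbf{Transfer to $\mathcal{W}$.} The map $s\mapsto 1-\exp(-s)$ is nondecreasing on $\mathbb{R}_+\cup\{\infty\}$, with the convention $\exp(-\infty)=0$. Composing it with $\mathcal{V}$ therefore gives $\mathcal{W}(X)\ge\mathcal{W}(Y)$.

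The only point requiring care is the treatment of infinite values. This is handled by working with partial sums and the extended-real convention, so there is no real obstacle.
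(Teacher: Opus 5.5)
Your proposal is correct and follows the paper's approach. The paper's proof just cites the definitions and the monotonicity of the infimum, and you have written out exactly those steps: inclusion of reachable sets, antitonicity of $\Psi$, termwise summation in the extended reals, and monotonicity of $s\mapsto 1-\exp(-s)$.
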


\begin{proof}
The result follows immediately from the definitions and the monotonicity of the infimum operator.
\end{proof}

As a consequence, we obtain the following.

\begin{lemma}\label{lem:VFinite}
For all $X \in \CompDOA$,  $\mathcal{V}(X) < \infty$.
\end{lemma}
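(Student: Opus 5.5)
The plan is to reduce the claim to Theorem~\ref{thm:DOASublevel} using the monotonicity of Lemma~\ref{lem:Monotonicity}; this is why the statement is presented as a consequence. Let $X \in \CompDOA$. By the definition of $\CompDOA$ we can pick a point $x \in X \cap \mathcal{D}_{\mathcal{A}}$. The singleton $\{x\}$ is nonempty and compact, so $\{x\} \in \Comp$, and clearly $\{x\} \subseteq X$.

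Lemma~\ref{lem:Monotonicity} then gives $\mathcal{V}(X) \le \mathcal{V}(\{x\})$. Since $x \in \mathcal{D}_{\mathcal{A}} = \mathbb{V}_{\infty}$ by Theorem~\ref{thm:DOASublevel}, we have $\mathcal{V}(\{x\}) < \infty$, and hence $\mathcal{V}(X) < \infty$.

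For completeness I would briefly justify the monotonicity step, since everything rests on it. If $X \subseteq Y$, then $\mathcal{R}(X,k) \subseteq \mathcal{R}(Y,k)$ for every $k$, directly from the definition of the reachable set, or equivalently from $\mathcal{R}(X,k) = F^{k}(X)$ in Lemma~\ref{lem:SemiGroup} together with the monotonicity of $F$ under inclusion. An infimum over a larger set is no larger, so $\Psi(\mathcal{R}(Y,k)) \le \Psi(\mathcal{R}(X,k))$ for each $k$. Summing these nonnegative terms over $k$ preserves the inequality, including when a sum is $+\infty$.

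I do not expect a genuine obstacle. The only point needing care is to apply monotonicity in the correct direction: the larger set $X$ has the smaller value, so the known-finite singleton value bounds $\mathcal{V}(X)$ from above.
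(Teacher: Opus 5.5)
Your proof is correct and is the argument the paper intends. The paper gives no separate proof and states the lemma as a consequence of monotonicity (Lemma~\ref{lem:Monotonicity}); that is exactly your chain $\mathcal{V}(X)\le\mathcal{V}(\{x\})<\infty$ for $x\in X\cap\mathcal{D}_{\mathcal{A}}$, using Theorem~\ref{thm:DOASublevel}.
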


We next study continuity properties of $\mathcal{V}$. The following technical result will be used, whose proof is a trivial extension of the proof of Lemma 15 in \cite{serry2026safe}.

\begin{lemma}\label{Lem:InfIsCts}
Let $g:\mathbb{R}^n \to \mathbb{R}$ be continuous, and define $G: D \subseteq \mathcal{K}(\mathbb{R}^n) \to \mathbb{R}$ by $G(X) \defas \inf_{x \in X} g(x), \quad X \in D$. Then $G$ is continuous on $D$ with respect to the Hausdorff distance $\dh$.
\end{lemma}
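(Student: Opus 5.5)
It suffices to prove continuity on all of $\mathcal{K}(\mathbb{R}^n)$; restricting to $D$ preserves it. Fix $X\in\mathcal{K}(\mathbb{R}^n)$ and a sequence $X_j\to X$ in $\dh$. We show both $\limsup_j G(X_j)\le G(X)$ and $\liminf_j G(X_j)\ge G(X)$. Since $g$ is continuous and each set is compact and nonempty, every infimum is attained. Choose $x^\ast\in X$ with $G(X)=g(x^\ast)$, and for each $j$ choose $x_j\in X_j$ with $G(X_j)=g(x_j)$.

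\textbf{Upper semicontinuity.} Because $\dha(X,X_j)\le\dh(X,X_j)\to 0$, there are points $y_j\in X_j$ with $\norm{y_j-x^\ast}\le \dh(X,X_j)$. Hence $y_j\to x^\ast$. By continuity of $g$,
\[
\limsup_j G(X_j)\le \lim_j g(y_j)=g(x^\ast)=G(X).
\]

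\textbf{Lower semicontinuity.} Because $\dha(X_j,X)\to 0$, there are points $z_j\in X$ with $\norm{x_j-z_j}\le \dh(X_j,X)$.

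Since $X$ is compact, the set $K\defas X+\mathbb{B}_n$ is compact. For all large $j$, both $x_j$ and $z_j$ lie in $K$. Therefore $g$ is uniformly continuous on $K$, and
\[
\abs{g(x_j)-g(z_j)}\to 0.
\]
It follows that
\[
G(X_j)=g(x_j)\ge g(z_j)-\abs{g(x_j)-g(z_j)}\ge G(X)-o(1),
\]
which gives $\liminf_j G(X_j)\ge G(X)$.

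Alternatively, one can argue by subsequences. Take a subsequence realizing the $\liminf$, extract a further convergent subsequence $x_j\to\bar x$, and note that $\bar x\in X$ because $\dist(x_j,X)\to 0$ and $X$ is closed. Then $\liminf_j G(X_j)=g(\bar x)\ge G(X)$.

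\textbf{Main obstacle.} The only subtle step is lower semicontinuity. It requires confining the minimizers $x_j$ to a fixed compact set so that uniform continuity, or a compactness argument, applies. This is exactly where the Hausdorff convergence $X_j\to X$ and the compactness of $X$ are used. The remaining steps are direct.
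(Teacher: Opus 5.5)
Your proof is correct: attained minimizers and continuity of $g$ at $x^\ast$ give the upper bound, and placing $x_j$ and its nearby point $z_j\in X$ inside the compact set $X+\mathbb{B}_n$, where $g$ is uniformly continuous, gives the lower bound. The paper gives no argument of its own and only defers to Lemma 15 of the cited reference; your route is the standard uniform-continuity-on-a-compact-neighborhood argument that such a deferral presumably amounts to, so there is nothing to add.
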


\begin{theorem}\label{thm:VCts}
The function $\mathcal{V}$ is continuous on $\CompDOA$

\end{theorem}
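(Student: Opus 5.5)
Fix $X\in\CompDOA$ and a sequence $X_j\to X$ in $\dh$. The plan is to split $\mathcal{V}$ into a finite head plus a tail, $\mathcal{V}(Y)=\sum_{k=0}^{N-1}\Psi(\mathcal{R}(Y,k))+\sum_{k=N}^{\infty}\Psi(\mathcal{R}(Y,k))$. Continuity of the head comes directly from earlier results. For each fixed $k$, $\mathcal{R}(\cdot,k)$ is continuous w.r.t. $\dh$ by Lemma~\ref{lem:SemiGroup}, and $\Psi$ is continuous by Lemma~\ref{Lem:InfIsCts} with $g=\alpha$. So each head term converges, and the whole task reduces to showing that the tails are uniformly small for all $Y$ in a $\dh$-neighborhood of $X$. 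Note that $X_j\in\CompDOA$ for large $j$ is not needed a priori, since the tail bound below will itself give finiteness of $\mathcal{V}(X_j)$.

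For the uniform tail bound, pick $x\in X\cap\mathcal{D}_{\mathcal{A}}$. As in the proof of Theorem~\ref{thm:DOASublevel}, choose $\pi_1$ and $K$ with $\dist(\varphi_x^{\pi_1}(K),\mathcal{A})<r/2$. The map $y\mapsto\varphi_y^{\pi_1}(K)$ is a finite composition of continuous maps, so there is $\delta>0$ such that $\norm{y-x}\le\delta$ implies $\dist(\varphi_y^{\pi_1}(K),\mathcal{A})\le s$. Here $s\in(0,r]$ is a number we are free to make small, using continuity of $\dist(\cdot,\mathcal{A})$.

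Now let $\dh(Y,X)<\delta$. Then $Y$ contains a point $y$ with $\norm{y-x}\le\delta$. Apply Assumption~\ref{Assumptions} at $z=\varphi_y^{\pi_1}(K)$ and concatenate the controls. Together with the monotonicity of $\lambda$ in its first argument and $\bar p\ge p$, this gives
\begin{equation*}
\sum_{k=K+N}^{\infty}\Psi(\mathcal{R}(Y,k))\le \overline{\alpha}M^{\bar p}\sum_{k=N}^{\infty}\lambda(s,k)^{\bar p}.
\end{equation*}
To turn the exponent $\bar p$ into $p$, use $\lambda(s,k)\le\lambda(s,0)\le s\le r$, hence $\lambda^{\bar p}\le r^{\bar p-p}\lambda^{p}$. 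The summability in item 2 of Assumption~\ref{Assumptions} then makes the right-hand side at most $C\sum_{k\ge N}\lambda(r,k)^p$, which is independent of $Y$ and tends to $0$ as $N\to\infty$. The same bound applies to $X$ itself.

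To finish, given $\varepsilon>0$, choose $N$ so that both the tail for $X$ and the uniform tail for neighbors are below $\varepsilon/3$. Then choose $j$ large enough that the head with $K+N$ terms differs by less than $\varepsilon/3$. This yields $|\mathcal{V}(X_j)-\mathcal{V}(X)|<\varepsilon$.

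The main obstacle is this uniform tail estimate. It requires that a single nearby point $y\in Y$ be steerable, in a uniform way, into the $r$-neighborhood at a fixed time $K$. The open-loop perturbation argument handles this by fixing $\pi_1$, and it is legitimate because $\Psi$ takes an infimum over reachable points, so any one controlled trajectory gives an upper bound. The fact that $s$ can be any value up to $r$ makes the smallness argument work without needing any continuity of $\lambda$ in $k$.
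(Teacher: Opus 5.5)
Your argument is correct. It follows the paper's head/tail decomposition and treats the head the same way, via Lemma~\ref{lem:SemiGroup} and Lemma~\ref{Lem:InfIsCts} with $g=\alpha$. The genuine difference is in how you get the uniform tail estimate.

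The paper works at the level of sets. It picks $N$ with $\inf_{y\in\mathcal{R}(X,N)}\dist(y,\mathcal{A})\le\varepsilon/2$ and transfers this to $\mathcal{R}(Y,N)$ through Hausdorff continuity of $\mathcal{R}(\cdot,N)$. It then applies Assumption~\ref{Assumptions} at radius $\varepsilon$, so smallness comes from the shrinking radius, i.e.\ from $\sum_{k}\lambda(\varepsilon,k)^{\bar p}\to 0$ as $\varepsilon\to 0$. That last fact is stated without proof; it is a dominated-convergence argument needing continuity of $\lambda(\cdot,k)$, $\bar p\ge p$ and summability. The paper also uses openness of $\mathcal{D}_{\mathcal{A}}$, likewise unproved, to keep neighbours of $X$ inside $\CompDOA$.

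You instead fix one open-loop trajectory from a point $x\in X\cap\mathcal{D}_{\mathcal{A}}$. Continuity of $y\mapsto\varphi_y^{\pi_1}(K)$ then steers a point of every nearby $Y$ into the $r$-neighbourhood of $\mathcal{A}$ at a fixed time $K$. Smallness comes from the tail of the convergent series $\sum_k\lambda(r,k)^p$, not from a shrinking radius.

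What each route buys:
\begin{itemize}
\item Your route is more elementary. It uses only monotonicity of $\lambda$, $\lambda(s,0)\le s$, and item~2 of Assumption~\ref{Assumptions}, never continuity of $\lambda$ in its first argument. It also proves, rather than assumes, that sets near $X$ lie in $\CompDOA$.
\item The paper's route stays entirely at the level of reachable sets. It gives an explicit modulus $\varepsilon+2\overline{\alpha}M^{\bar p}\sum_k\lambda(\varepsilon,k)^{\bar p}$ expressed through the stabilization rate.
\end{itemize}

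One small inaccuracy: once $K$ is fixed, $s$ cannot be made arbitrarily small, since it must exceed $\dist(\varphi_x^{\pi_1}(K),\mathcal{A})$. You only ever use $s\le r$, so just take $s=r$. Likewise, your closing remark should refer to continuity of $\lambda$ in $s$, not in $k$.
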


\begin{proof}
Recall the definitions of $M$, $r$, and $\lambda$ in Assumption~\ref{Assumptions}, and the definition of $\Psi$ in~\eqref{eq:Psi}. Let $X \in \CompDOA$ and let $\varepsilon > 0$ be arbitrary. Without loss of generality, assume $\varepsilon \le r$. Since $X \cap \mathcal{D}_{\mathcal{A}} \neq \emptyset$, there exists $x \in X \cap \mathcal{D}_{\mathcal{A}}$ and an input signal driving $x$ to $\mathcal{A}$ asymptotically. Consequently, there exists $N \in \mathbb{Z}_{+}$ such that
$
\inf_{y \in \mathcal{R}(X,N)} \dist(y,\mathcal{A}) \le {\varepsilon}/{2}.
$
In particular, there exists $y \in \mathcal{R}(X,N)$ such that $\dist(y,\mathcal{A}) \le \varepsilon/2 \le r$. By $\ell_p$-stabilizability, there exists an input signal $\pi \in U^{\mathbb{Z}_{+}}$ such that
$\dist(\varphi_y^\pi(k),\mathcal{A}) \le M \lambda(\varepsilon,k),~ k \in \mathbb{Z}_{+}$. Hence,
$\inf_{z \in \mathcal{R}(X,N+k)} \dist(z,\mathcal{A}) \le M \lambda(\varepsilon,k), ~ k \in \mathbb{Z}_{+}$.
Next, since $\mathcal{D}_{\mathcal{A}}$ is open and $X \cap \mathcal{D}_{\mathcal{A}} \neq \emptyset$, there exists $\delta > 0$ such that for all $Y \in \Comp$ with $\dh(X,Y) \le \delta$, we have $Y \in \CompDOA$. Moreover, by continuity of the reachable set map $\mathcal{R}(\cdot,k)$ for $k \in \intcc{0;N}$, continuity of $\alpha$, and Lemma~\ref{Lem:InfIsCts}, we can choose $\delta > 0$ sufficiently small such that for all $Y \in \CompDOA$ with $\dh(X,Y) \le \delta$,
\begin{align}
\abs{\sum_{k=0}^{N-1} \Psi(\mathcal{R}(X,k)) - \sum_{k=0}^{N-1} \Psi(\mathcal{R}(Y,k))} &\le \varepsilon, \label{eq:finite_part}\\
\dh(\mathcal{R}(X,N), \mathcal{R}(Y,N)) &\le \frac{\varepsilon}{2}. \label{eq:R_cont}
\end{align}
Using \eqref{eq:R_cont}, we obtain
$\inf_{y \in \mathcal{R}(Y,N)} \dist(y,\mathcal{A})
\le \dh(\mathcal{R}(X,N), \mathcal{R}(Y,N))+ \inf_{y \in \mathcal{R}(X,N)} \dist(y,\mathcal{A})
\le \varepsilon$. Applying $\ell_p$-stabilizability again, we deduce that
$\inf_{z \in \mathcal{R}(Y,N+k)} \dist(z,\mathcal{A}) \le M \lambda(\varepsilon,k), \quad k \in \mathbb{Z}_{+}$. Using the definition of $\Psi$ and the bounds on $\alpha$, we obtain
$\sum_{k=N}^{\infty} \Psi(\mathcal{R}(Y,k))
\le \sum_{k=0}^{\infty} \overline{\alpha}\,\inf_{y\in \mathcal{R}(Y,N+k)}\dist(y,\mathcal{A})^{\bar{p}} \le \overline{\alpha} M^{\bar p} \sum_{k=0}^{\infty} \lambda(\varepsilon,k)^{\bar p}$. An analogous bound holds for $X$. Combining this with \eqref{eq:finite_part}, we obtain
$\abs{\mathcal{V}(X) - \mathcal{V}(Y)}
\le \varepsilon + 2 \overline{\alpha} M^{\bar p} \sum_{k=0}^{\infty} \lambda(\varepsilon,k)^{\bar p}$. Since $\varepsilon > 0$ is arbitrary and the function
$
s \mapsto \sum_{k=0}^{\infty} \lambda(s,k)^{\bar p}
$
is continuous with value $0$ at $s=0$, the right-hand side can be made arbitrarily small by choosing $\varepsilon$ sufficiently small. Therefore, $\mathcal{V}$ is continuous at $X$, and since $X$ was arbitrary, $\mathcal{V}$ is continuous on $\CompDOA$.
\end{proof}
Next, we want to illustrate the continuity of $\mathcal{W}$ over $\Comp$. This will require analyzing the asymptotic behavior of $\mathcal{V}$ near its singular values.
\begin{theorem}\label{thm:DOABoundary}
Let $X \in \Comp$ satisfy $X \cap \mathcal{D}_{\mathcal{A}} = \emptyset$, and let $\{X_k\}_{k\in \mathbb{Z}_{+}} \subset \Comp$ be a sequence converging to $X$ in the Hausdorff distance. Then
$\lim_{k\to\infty} \mathcal{V}(X_k) = \infty$.

\end{theorem}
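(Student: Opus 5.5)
The plan is to argue by contradiction, using only the definition of $\mathcal{V}$ and the continuity of the reachable-set maps from Lemma~\ref{lem:SemiGroup}. Suppose the claim fails. Then there are a constant $C<\infty$ and a subsequence, still denoted $\{X_k\}$, with $\mathcal{V}(X_k)\le C$ for all $k$. The key observation is that a uniform bound on $\mathcal{V}$ forces the reachable sets of $X_k$ to approach $\mathcal{A}$ at a uniform time. This property then passes to the limit set $X$, which would give $X\cap\mathcal{D}_{\mathcal{A}}\neq\emptyset$.

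\textbf{Step 1 (uniform entry time).} Since all terms of the series defining $\mathcal{V}(X_k)$ are nonnegative, for every $N$ we have
\[
(N+1)\min_{j\in\intcc{0;N}}\Psi(\mathcal{R}(X_k,j))\le \sum_{j=0}^{N}\Psi(\mathcal{R}(X_k,j))\le C .
\]
Fix $\eta>0$, to be chosen as $\eta<\underline{\alpha}(r/2)^{\bar p}$, and choose $N$ with $C/(N+1)<\eta$. Then for each $k$ there is $j_k\in\intcc{0;N}$ with $\Psi(\mathcal{R}(X_k,j_k))<\eta$. Because $j_k$ ranges over a finite set, we may pass to a further subsequence on which $j_k\equiv j$ is constant. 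The lower bound in \eqref{eq:AlphaBounds} then gives
\[
\inf_{y\in\mathcal{R}(X_k,j)}\dist(y,\mathcal{A})<(\eta/\underline{\alpha})^{1/\bar p}<r/2 .
\]

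\textbf{Step 2 (passing to the limit).} By Lemma~\ref{lem:SemiGroup}, $\mathcal{R}(\cdot,j)$ is continuous with respect to $\dh$, so $\mathcal{R}(X_k,j)\to\mathcal{R}(X,j)$. Lemma~\ref{Lem:InfIsCts}, applied with $g=\dist(\cdot,\mathcal{A})$, shows that $\inf_{y\in\mathcal{R}(X,j)}\dist(y,\mathcal{A})\le r/2<r$. Since $\mathcal{R}(X,j)$ is compact, this infimum is attained at some $y=\varphi_x^{\pi_1}(j)$ with $x\in X$ and $\pi_1\in U^{\mathbb{Z}_{+}}$. Assumption~\ref{Assumptions} supplies an input $\pi_2$ driving $y$ to $\mathcal{A}$, since $\lambda(s,k)\to0$ as $k\to\infty$ by summability. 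Concatenating $\pi_1$ and $\pi_2$ exactly as in the proof of Theorem~\ref{thm:DOASublevel} shows $x\in\mathcal{D}_{\mathcal{A}}$. This contradicts $X\cap\mathcal{D}_{\mathcal{A}}=\emptyset$.

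\textbf{Main obstacle.} The delicate point is Step 1. Bounded $\mathcal{V}$ alone only says that some term is small at some time depending on $k$. The pigeonhole argument over the finite window $\intcc{0;N}$ is what makes that time uniform in $k$, and this uniformity is needed so that continuity of a single map $\mathcal{R}(\cdot,j)$ can be used in Step 2. A secondary check is that the values $\mathcal{V}(X_k)=\infty$ cause no trouble: such terms already satisfy the bound we want, so only the finite-valued case needs the argument.
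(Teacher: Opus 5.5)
Your proof is correct and follows essentially the same route as the paper. In both, a uniform bound on $\mathcal{V}(X_k)$ forces the reachable sets $\mathcal{R}(X_k,\cdot)$ to come within a fixed distance of $\mathcal{A}$ by a uniformly bounded time; this passes to $\mathcal{R}(X,\cdot)$ by continuity and, via Assumption~\ref{Assumptions}, contradicts $X\cap\mathcal{D}_{\mathcal{A}}=\emptyset$. The differences are minor. The paper works with first-entry times $T_k\le T$ into $\mathcal{A}+\theta\mathbb{B}_n$ and uses the stabilizability estimate (with $\theta<r/M$ and $\lambda(\theta,j)\le\theta$) to move to the common time $T$. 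You instead pigeonhole onto a constant index $j$, which saves one application of the assumption and makes the final step to a point $x\in X\cap\mathcal{D}_{\mathcal{A}}$ fully explicit.
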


\begin{proof}
Without loss of generality, assume $X_k \in \CompDOA$ for all $k \in \mathbb{Z}_{+}$ (otherwise, the conclusion is immediate). Recall the constants $M$, $r$, and function $\lambda$ from Assumption~\ref{Assumptions}. Fix $\theta \in \intoo{0,r/M}$. For each $k \in \mathbb{Z}_{+}$, define $T_k \in \mathbb{Z}_{+}$ as the smallest integer such that
$\mathcal{R}(X_k,T_k) \cap \big(\mathcal{A} + \theta \mathbb{B}_n\big) \neq \emptyset.
$
By minimality of $T_k$, for all $j \in \intcc{0;T_k-1}$,
$
\inf_{y \in \mathcal{R}(X_k,j)} \dist(y,\mathcal{A}) \ge \theta$.
\noindent\textbf{Case 1: $T_k \to \infty$.}
In this case, $\mathcal{V}(X_k)
\ge \sum_{j=0}^{T_k-1} \Psi\big(\mathcal{R}(X_k,j)\big)
\ge \underline{\alpha}\,\theta^{\bar p}\,(T_k-1)$, which implies $\mathcal{V}(X_k)\to\infty$.
\noindent\textbf{Case 2:} $T_k\not\rightarrow \infty$.
Then there exists a subsequence, again denoted $\{T_{k}\}$, and $T \in \mathbb{Z}_{+}$ such that $T_k \le T$ for all $k$. Since $\theta \le r$, we have
$\mathcal{R}(X_k,T_k) \cap \big(\mathcal{A} + r \mathbb{B}_n\big) \neq \emptyset$. By $\ell_p$-stabilizability, it follows that $\inf_{y \in \mathcal{R}(X_k,T_k + j)} \dist(y,\mathcal{A})
\le M \lambda(\theta,j), \quad j \in \mathbb{Z}_{+}$.

In particular,
$\inf_{y \in \mathcal{R}(X_k,T)} \dist(y,\mathcal{A}) \le M\theta, \quad k \in \mathbb{Z}_{+}$.

Using continuity of the reachable set map with respect to the Hausdorff distance $\dh$ and continuity of the distance function, we pass to the limit to obtain
$\inf_{y \in \mathcal{R}(X,T)} \dist(y,\mathcal{A}) \le M\theta \le r$. Applying $\ell_p$-stabilizability again yields
$
\inf_{y \in \mathcal{R}(X,T+j)} \dist(y,\mathcal{A})
\le M \lambda(r,j), \quad j \in \mathbb{Z}_{+}$. Hence,
$\lim_{j\to\infty} \inf_{y \in \mathcal{R}(X,j)} \dist(y,\mathcal{A}) = 0$, which implies that $X \cap \mathcal{D}_{\mathcal{A}} \neq \emptyset$, i.e., $X \in \CompDOA$. This contradicts the assumption $X \cap \mathcal{D}_{\mathcal{A}} = \emptyset$. Therefore, only Case~1 is possible, and $\mathcal{V}(X_k) \to \infty$.
\end{proof}
Using Theorems \ref{thm:VCts} and \ref{thm:DOABoundary}, we have:
\begin{corollary}
   $\mathcal{W}$ is continuous over  $\Comp$. 
\end{corollary}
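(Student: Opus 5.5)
The plan is to prove continuity of $\mathcal{W}$ at every $X\in\Comp$ by splitting into two cases, according to whether $X\in\CompDOA$ or $X\cap\mathcal{D}_{\mathcal{A}}=\emptyset$. Since $\Comp$ with $\dh$ is a metric space, sequential continuity suffices. So we fix $X$ and a sequence $\{X_k\}\subset\Comp$ with $\dh(X_k,X)\to 0$, and we show $\mathcal{W}(X_k)\to\mathcal{W}(X)$.

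\textbf{Case 1: $X\in\CompDOA$.} By Lemma~\ref{lem:VFinite}, $\mathcal{V}(X)<\infty$. As observed in the proof of Theorem~\ref{thm:VCts}, openness of $\mathcal{D}_{\mathcal{A}}$ gives a $\delta>0$ such that every $Y$ with $\dh(X,Y)\le\delta$ lies in $\CompDOA$. Concretely, pick $x\in X\cap\mathcal{D}_{\mathcal{A}}$ and a ball around $x$ contained in $\mathcal{D}_{\mathcal{A}}$; any $Y$ close to $X$ contains a point within $\dh(X,Y)$ of $x$. Hence $X_k\in\CompDOA$ for all large $k$, and Theorem~\ref{thm:VCts} gives $\mathcal{V}(X_k)\to\mathcal{V}(X)$ as finite real numbers. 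Because $t\mapsto 1-\exp(-t)$ is continuous on $\mathbb{R}_+$, we conclude $\mathcal{W}(X_k)\to\mathcal{W}(X)$.

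\textbf{Case 2: $X\cap\mathcal{D}_{\mathcal{A}}=\emptyset$.} We first show $\mathcal{V}(X)=\infty$, hence $\mathcal{W}(X)=1$. Suppose instead that $\mathcal{V}(X)<\infty$. Then $\Psi(\mathcal{R}(X,k))\to 0$, and the lower bound in \eqref{eq:AlphaBounds} gives $\inf_{y\in\mathcal{R}(X,k)}\dist(y,\mathcal{A})\to 0$. Compactness of $\mathcal{R}(X,k)$ (Lemma~\ref{lem:SemiGroup}) means the infimum is attained, so some $x\in X$ and some input reach the $r$-neighbourhood of $\mathcal{A}$. Concatenating with the stabilizing input from Assumption~\ref{Assumptions}, exactly as in the proof of Theorem~\ref{thm:DOASublevel}, yields $x\in\mathcal{D}_{\mathcal{A}}$, a contradiction. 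A quicker route is to apply Theorem~\ref{thm:DOABoundary} to the constant sequence $X_k=X$, which directly gives $\mathcal{V}(X)=\infty$. For the general sequence, Theorem~\ref{thm:DOABoundary} gives $\mathcal{V}(X_k)\to\infty$, so $\mathcal{W}(X_k)=1-\exp(-\mathcal{V}(X_k))\to 1=\mathcal{W}(X)$, using the convention $\exp(-\infty)=0$ for any terms with $\mathcal{V}(X_k)=\infty$.

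The only point needing some care is the first step of Case~1: that $\CompDOA$ is an open subset of $\Comp$, which lets Theorem~\ref{thm:VCts} apply along the tail of the sequence. Everything else follows by composing the earlier theorems with the continuous map $t\mapsto 1-e^{-t}$ on $\intcc{0,\infty}$.
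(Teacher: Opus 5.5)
Your proof is correct and follows the paper's route: the corollary is obtained directly from Theorem~\ref{thm:VCts} on $\CompDOA$ (whose openness in $\Comp$ you rightly check) and from Theorem~\ref{thm:DOABoundary} off it, composed with the continuous map $t\mapsto 1-e^{-t}$ on $\intcc{0,\infty}$. Of your two arguments that $\mathcal{V}(X)=\infty$ when $X\cap\mathcal{D}_{\mathcal{A}}=\emptyset$, keep the direct concatenation one: the paper's proof of Theorem~\ref{thm:DOABoundary} already assumes this fact in its ``without loss of generality'' step, so deriving it from that theorem via a constant sequence is mildly circular.
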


\subsection{Characterizing the value functions: Bellman-type equations}
\label{sec:Zubov}
In this section, we derive Bellman-type equations corresponding to the functions $\mathcal{V}$ and $\mathcal{W}$, respectively.

\begin{theorem}\label{Thm:LyapunovEquation}
 $\mathcal{V}$ satisfies the  equation (w.r.t. to the function $v$)
\begin{align}\label{eq:LyapunovEqn}
v(X)
= \Psi(X)+v(F(X)),~X\in \Comp.
\end{align}
\end{theorem}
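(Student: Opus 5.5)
The plan is to verify the identity directly from the definition \eqref{eq:Lyapunov1}, using the semigroup property of reachable sets in Lemma~\ref{lem:SemiGroup}. Fix $X\in\Comp$. First, Lemma~\ref{lem:SemiGroup} gives $F(X)=\mathcal{R}(X,1)\in\Comp$, so $\mathcal{V}(F(X))$ is well defined. It also gives, for every $k\in\mathbb{N}$,
\[
\mathcal{R}(X,k)=\mathcal{R}(F(X),k-1).
\]
Second, since $\mathcal{R}(X,0)=X$, I will separate the $k=0$ term of the series defining $\mathcal{V}(X)$:
\[
\mathcal{V}(X)=\Psi(X)+\sum_{k=1}^{\infty}\Psi(\mathcal{R}(X,k))
=\Psi(X)+\sum_{j=0}^{\infty}\Psi(\mathcal{R}(F(X),j))
=\Psi(X)+\mathcal{V}(F(X)).
\]
The middle equality is the reindexing $j=k-1$ combined with the semigroup identity above.

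The only point needing care is that $\mathcal{V}$ takes values in $\mathbb{R}_+\cup\{\infty\}$, so the identity must hold in the extended sense. This causes no difficulty. All terms $\Psi(\cdot)$ are nonnegative and finite, since $\alpha\ge 0$ is continuous and the sets are compact. Hence each series has nonnegative terms and a well-defined value in $\intcc{0,\infty}$. Splitting off the first finite term and reindexing is valid for such series, whether they converge or diverge. In particular, $\mathcal{V}(X)=\infty$ if and only if $\mathcal{V}(F(X))=\infty$, and then both sides of \eqref{eq:LyapunovEqn} equal $\infty$.

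I do not expect a real obstacle. The one step to state explicitly is the semigroup identity $\mathcal{R}(X,k)=\mathcal{R}(F(X),k-1)$ for all $k\ge 1$, which Lemma~\ref{lem:SemiGroup} supplies. Finally, I will note that on $\CompDOA$ the equation holds with finite values, by Lemma~\ref{lem:VFinite}. For this I will also observe that $F(X)\in\CompDOA$ whenever $X\in\CompDOA$: a point of $X\cap\mathcal{D}_{\mathcal{A}}$ with a stabilizing input $\pi$ is mapped to $f(x,\pi(0))\in F(X)\cap\mathcal{D}_{\mathcal{A}}$.
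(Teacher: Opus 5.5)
Your proof is correct and matches the argument the paper relies on (the paper defers to the proof of Theorem 18 in its companion reference). That argument is the same as yours: split off the $k=0$ term $\Psi(\mathcal{R}(X,0))=\Psi(X)$, reindex the tail via $\mathcal{R}(X,k)=\mathcal{R}(F(X),k-1)$ from Lemma~\ref{lem:SemiGroup}, and use nonnegativity of $\Psi$ to make the identity valid in $[0,\infty]$. Your closing observation that $F$ maps $\CompDOA$ into itself is correct but not needed for the statement.
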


\begin{proof}
See the proof of Theorem 18 in \cite{serry2026safe}.
\end{proof}
\begin{theorem}\label{thm:ZubovEqns} For $X\in \Comp$,
 $\mathcal{W}$ satisfies the  equations (w.r.t. to the function $w$)
\begin{equation}\label{eq:ZubovEqn1}
w(X)-w({F}(X))=\xi(X)(1-w(F(X))),
\end{equation}
and
\begin{equation}\label{eq:ZubovEqn2}
w(X)-w({F}(X))=\beta(X)(1-w(X)),
\end{equation}
where 
\begin{equation}\label{eq:xi}
\xi(X)\defas 1-\exp(-\Psi(X)).
\end{equation}
and
\begin{equation}\label{eq:beta}
\beta(X)\defas \exp(\Psi(X))-1.
\end{equation}
\end{theorem}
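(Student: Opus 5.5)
The plan is to derive both identities from the Bellman-type equation for $\mathcal{V}$ in Theorem~\ref{Thm:LyapunovEquation}, namely $\mathcal{V}(X)=\Psi(X)+\mathcal{V}(F(X))$ for all $X\in\Comp$. We exponentiate it and then rearrange. We split into two cases according to whether $\mathcal{V}(X)$ is finite. Note that $\Psi(X)<\infty$ always, since $\alpha$ is continuous and $X$ is compact. Hence $\mathcal{V}(X)=\infty$ if and only if $\mathcal{V}(F(X))=\infty$.

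\emph{Case $\mathcal{V}(X)<\infty$.} Applying $\exp(-\cdot)$ to the Bellman equation gives
\[
1-\mathcal{W}(X)=\exp(-\Psi(X))\bigl(1-\mathcal{W}(F(X))\bigr).
\]
Subtracting this from $1-\mathcal{W}(F(X))$ yields
\[
\mathcal{W}(X)-\mathcal{W}(F(X))=\bigl(1-\exp(-\Psi(X))\bigr)\bigl(1-\mathcal{W}(F(X))\bigr),
\]
which is \eqref{eq:ZubovEqn1} with $\xi$ as in \eqref{eq:xi}. For \eqref{eq:ZubovEqn2}, we instead solve the exponentiated relation for $1-\mathcal{W}(F(X))$:
\[
1-\mathcal{W}(F(X))=\exp(\Psi(X))\bigl(1-\mathcal{W}(X)\bigr).
\]
Subtracting $1-\mathcal{W}(X)$ from both sides gives
\[
\mathcal{W}(X)-\mathcal{W}(F(X))=\bigl(\exp(\Psi(X))-1\bigr)\bigl(1-\mathcal{W}(X)\bigr),
\]
which is \eqref{eq:ZubovEqn2} with $\beta$ as in \eqref{eq:beta}.

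\emph{Case $\mathcal{V}(X)=\infty$.} Then $\mathcal{V}(F(X))=\infty$ as well, so by the convention $\exp(-\infty)=0$ we have $\mathcal{W}(X)=\mathcal{W}(F(X))=1$. Both sides of \eqref{eq:ZubovEqn1} and of \eqref{eq:ZubovEqn2} then vanish, since $\xi(X)$ and $\beta(X)$ are finite real numbers.

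The only delicate point is this infinite case: we must check that the Bellman equation is meaningful in the extended reals, which it is because $\Psi(X)$ is finite. Beyond that, the argument is routine algebra.
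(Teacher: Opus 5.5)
Your proof is correct and takes essentially the same route as the paper, which defers to the cited reference: exponentiate the Bellman equation $\mathcal{V}(X)=\Psi(X)+\mathcal{V}(F(X))$ of Theorem~\ref{Thm:LyapunovEquation} and rearrange, solving instead for $1-\mathcal{W}(F(X))$ to obtain the $\beta$-form. You also handle the case $\mathcal{V}(X)=\infty$ explicitly, using finiteness of $\Psi(X)$ and the convention $\exp(-\infty)=0$, which correctly covers the extended-real case.
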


\begin{proof}
See the proofs of Theorems 19 and 20 in \cite{serry2026safe}.
\end{proof}
We have shown that the value functions $\mathcal{V}$ and $\mathcal{W}$ are solutions to the equations \eqref{eq:LyapunovEqn} and \eqref{eq:ZubovEqn1}, respectively. Next, we show that the solutions to these equations are unique with respect to functions that satisfy a particular limit condition. We start with the following technical result:
\begin{lemma}\label{lem:boundedness}
    Assume that $w\colon \Comp\rightarrow \mathbb{R}$   satisfies equation \eqref{eq:ZubovEqn2} over $\Comp$, is finite over $\CompDOA$, and, for any  $X\in \CompDOA$,  $\lim_{k\rightarrow \infty} w(\mathcal{R}(X,k))=0
    $. Then, $w(X)<1$ for all  $X\in\CompDOA$.
\end{lemma}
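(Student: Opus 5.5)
Since equation \eqref{eq:ZubovEqn2} has the form $w(X)-w(F(X))=\beta(X)(1-w(X))$ with $\beta(X)\ge 0$, the plan is to rewrite it as a relation for $1-w$ along the reachable-set sequence and iterate. Setting $z(X)\defas 1-w(X)$, equation \eqref{eq:ZubovEqn2} becomes
\[
z(F(X)) = (1+\beta(X))\,z(X) = \exp(\Psi(X))\,z(X),\qquad X\in\Comp .
\]
The multiplier $\exp(\Psi(X))$ is finite and at least $1$. Thus $z(F(X))$ and $z(X)$ always have the same sign, and the strict sign of $z$ is preserved along the iteration $X\mapsto F(X)$.

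Next, fix $X\in\CompDOA$ and argue by contradiction, assuming $w(X)\ge 1$, i.e.\ $z(X)\le 0$. First I check that $F$ maps $\CompDOA$ into itself. If $x\in X\cap\mathcal{D}_{\mathcal{A}}$ with stabilizing input $\pi$, then $f(x,\pi(0))\in F(X)$, and this point lies in $\mathcal{D}_{\mathcal{A}}$ via the shifted input. Hence $\mathcal{R}(X,k)=F^k(X)\in\CompDOA$ for all $k$ by Lemma~\ref{lem:SemiGroup}, and $w$ is finite along the whole sequence. Iterating the multiplicative relation gives
\[
z(\mathcal{R}(X,k)) = \exp\Big(\sum_{j=0}^{k-1}\Psi(\mathcal{R}(X,j))\Big)\, z(X),
\]
so $z(\mathcal{R}(X,k))\le z(X)\le 0$ for every $k$, because the exponential factor is $\ge 1$ and $z(X)\le 0$.

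The hypothesis $\lim_{k\to\infty} w(\mathcal{R}(X,k))=0$ gives $z(\mathcal{R}(X,k))\to 1$. This contradicts $z(\mathcal{R}(X,k))\le 0$ for all $k$, so $w(X)<1$. One could also avoid the contradiction: the limit gives $z(\mathcal{R}(X,K))>0$ for some $K$, and backward propagation of the sign (dividing by the positive factor $\exp(\Psi)$) then yields $z(X)>0$.

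The main point needing care is the invariance of $\CompDOA$ under $F$, since finiteness of $w$ is only assumed there. As shown above, this is elementary. Beyond that I expect no real obstacle: $\Psi$ is finite on $\Comp$, so each multiplier $\exp(\Psi(\cdot))$ is a finite number $\ge 1$ and the iteration is well defined.
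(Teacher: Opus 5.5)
Your proof is correct and is essentially the paper's argument. Both assume $w(X)\ge 1$ for contradiction, use $\beta\ge 0$ in \eqref{eq:ZubovEqn2} to propagate $w(\mathcal{R}(X,k))\ge 1$ along the reachable sequence, and contradict $\lim_{k\to\infty} w(\mathcal{R}(X,k))=0$; your substitution $z=1-w$ with the multiplier $\exp(\Psi)\ge 1$ is just a compact way of writing the paper's induction step (and your check that $F$ preserves $\CompDOA$ is harmless, though not strictly needed since $w$ is real-valued on all of $\Comp$).
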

\begin{proof}
Let  $X\in \CompDOA$ then it follows that
 $
     \lim_{k\rightarrow \infty} \inf_{y\in \mathcal{R}(X,k)}\dist(y,\mathcal{A})=0,
 $
 and  $\mathcal{R}(X,k)\in \CompDOA$ for all $k\in \mathbb{Z}_{+}$. 
By contradiction, assume that, $w(X)\geq 1$.  Using equation \eqref{eq:ZubovEqn2}, we have 
  $
w(X)-w(\mathcal{R}(X,1))=\beta(X)(1-w(X))\leq 0$. Hence, $
w(\mathcal{R}(X,1))\geq w(X)\geq 1,$
and by induction, we have $
w(\mathcal{R}(X,k+1))\geq w(\mathcal{R}(X,k))\geq 1
$
for all $k\in \mathbb{Z}_{+}$. Hence, $
\lim_{k\rightarrow\infty}w(\mathcal{R}(X,k))\neq 0,
$
which yields a contradiction as $\lim_{k\rightarrow \infty}w(\mathcal{R}(X,k))$ 
must be zero due to the assumption on $w$, and that completes the proof.
\end{proof}

Now we show the uniqueness result for the Bellman-type equation associated with the function $\mathcal{V}$.
\begin{theorem}\label{thm:LyapunovEqnUniqueness}
    Let $\mathbf{v}\colon \Comp\rightarrow \mathbb{R}$   satisfy equation \eqref{eq:LyapunovEqn} over   
    $\Comp$  and be finite over $\CompDOA$. Moreover, for any  compact  $X\in \CompDOA$
,  $\lim_{k\rightarrow \infty} \mathbf{v}(\mathcal{R}(X,k))=0
    $. Then,
    $
\mathbf{v}(X)=\mathcal{V}(X)$ for all such $X.
    $
\end{theorem}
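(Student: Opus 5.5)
The plan is to telescope equation \eqref{eq:LyapunovEqn} along the reachable-set sequence and then pass to the limit using the decay condition. Fix $X\in\CompDOA$. By Lemma~\ref{lem:SemiGroup}, $F(\mathcal{R}(X,k))=\mathcal{R}(X,k+1)$ for every $k\in\mathbb{Z}_{+}$. Moreover, every $\mathcal{R}(X,k)$ lies in $\CompDOA$. Indeed, if $x\in X\cap\mathcal{D}_{\mathcal{A}}$ is driven to $\mathcal{A}$ by some $\pi$, then $\varphi_x^{\pi}(k)\in\mathcal{R}(X,k)$, and the shifted input keeps driving this point to $\mathcal{A}$, so $\varphi_x^{\pi}(k)\in\mathcal{D}_{\mathcal{A}}$. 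Hence $\mathbf{v}$ is finite along the whole sequence, and the telescoping below involves only real numbers.

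Next, I would apply \eqref{eq:LyapunovEqn} to $\mathcal{R}(X,k)$ for $k=0,\dots,N-1$ and sum. This gives
\begin{equation*}
\mathbf{v}(X)=\sum_{k=0}^{N-1}\Psi(\mathcal{R}(X,k))+\mathbf{v}(\mathcal{R}(X,N)),\quad N\in\mathbb{N}.
\end{equation*}
The same identity holds for $\mathcal{V}$ by Theorem~\ref{Thm:LyapunovEquation}. Finiteness of $\mathcal{V}$ on $\CompDOA$ is given by Lemma~\ref{lem:VFinite}; it also follows from Theorem~\ref{thm:DOASublevel} together with Lemma~\ref{lem:Monotonicity}, since $\mathcal{V}(X)\le\mathcal{V}(\{x\})<\infty$ for $x\in X\cap\mathcal{D}_{\mathcal{A}}$.

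Finally, I would let $N\to\infty$. By hypothesis, $\mathbf{v}(\mathcal{R}(X,N))\to0$. The partial sums converge to $\mathcal{V}(X)$ by the definition \eqref{eq:Lyapunov1}, because the series has nonnegative terms and a finite sum. Therefore $\mathbf{v}(X)=\lim_{N\to\infty}\sum_{k=0}^{N-1}\Psi(\mathcal{R}(X,k))=\mathcal{V}(X)$.

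A variant argument would instead subtract the two telescoped identities, so that $\mathbf{v}(X)-\mathcal{V}(X)=\mathbf{v}(\mathcal{R}(X,N))-\mathcal{V}(\mathcal{R}(X,N))$, and then show that both tails vanish. For $\mathcal{V}$, the tail is $\mathcal{V}(\mathcal{R}(X,N))=\sum_{k\ge N}\Psi(\mathcal{R}(X,k))$, which is the remainder of a convergent series and so tends to zero. The direct limit argument above avoids this step.

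The only delicate point is ensuring that $\mathcal{R}(X,k)\in\CompDOA$ for all $k$. Without it, the finiteness hypothesis on $\mathbf{v}$ would not guarantee that the telescoped terms are real numbers, and the summation could produce undefined expressions such as $\infty-\infty$. This invariance is exactly the first step above (it is also used in the proof of Lemma~\ref{lem:boundedness}). Everything after that is routine.
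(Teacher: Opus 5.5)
Your proposal is correct and follows essentially the same route as the paper: telescope \eqref{eq:LyapunovEqn} along $\mathcal{R}(X,k)$ using $F(\mathcal{R}(X,k))=\mathcal{R}(X,k+1)$, then let $k\to\infty$ using the hypothesis $\mathbf{v}(\mathcal{R}(X,k))\to 0$. Your shifted-input argument for $\mathcal{R}(X,k)\in\CompDOA$ fills in a step the paper only asserts. Also, invoking finiteness of $\mathcal{V}(X)$ is unnecessary, since the telescoped identity already shows the partial sums converge to the finite value $\mathbf{v}(X)$.
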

\begin{proof}
 Let $X\in \CompDOA$, then $\mathcal{R}(X,k)\in \CompDOA$ for all $k\in \mathbb{Z}_{+}$ and $\mathbf{v}$ is finite over all the reachable set values.  We consequently have, using \eqref{eq:LyapunovEqn}, 
 $\mathbf{v}(X)-\mathbf{v}(\mathcal{R}(X,k))=\sum_{j=0}^{k-1}(\mathbf{v}(\mathcal{R}(X,j))-\mathbf{v}(\mathcal{R}(X,j+1))) =\sum_{j=0}^{k-1}\Psi(\mathcal{R}(X,j))$.
It follows by assumption that $\lim_{k\rightarrow \infty}\mathbf{v}(\mathcal{R}(X,k))=0$. We also have  $\mathcal{V}(X)<\infty$ (Lemma \ref{lem:VFinite}), then taking the limit as $k\rightarrow \infty$ on the both sides of the above equation results in
    $
     \mathbf{v}(X)=\sum_{j=0}^{\infty}\Psi(\mathcal{R}(X,j))=\mathcal{V}(X).
    $
\end{proof}

Now, we introduce the uniqueness result for the equation associated with the value function $\mathcal{W}$.
\begin{theorem}
    Let $\mathbf{w}\colon \Comp\rightarrow \mathbb{R}$ be a bounded function  satisfying 
    equation \eqref{eq:ZubovEqn1} over $\Comp$ and for any  compact set $X\in \CompDOA$,    $\lim_{k\rightarrow \infty} \mathbf{w}(\mathcal{R}(X,k))=0
    $.  
Then, $\mathbf{w}=\mathcal{W}$.
\end{theorem}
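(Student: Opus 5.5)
The plan is to turn equation \eqref{eq:ZubovEqn1} into a multiplicative recursion for $1-\mathbf{w}$ and iterate it along reachable sets. Since $1-\xi(X)=\exp(-\Psi(X))$ by \eqref{eq:xi}, equation \eqref{eq:ZubovEqn1} can be rearranged as
\[
1-\mathbf{w}(X)=\exp(-\Psi(X))\bigl(1-\mathbf{w}(F(X))\bigr),\quad X\in\Comp .
\]
Applying this identity at $\mathcal{R}(X,j)$ for $j=0,\dots,k-1$, and using $F(\mathcal{R}(X,j))=\mathcal{R}(X,j+1)$ from Lemma~\ref{lem:SemiGroup}, we get by induction
\[
1-\mathbf{w}(X)=\exp\Bigl(-\sum_{j=0}^{k-1}\Psi(\mathcal{R}(X,j))\Bigr)\bigl(1-\mathbf{w}(\mathcal{R}(X,k))\bigr),\quad k\in\mathbb{N}.
\]
The proof then splits into two cases according to whether $X$ belongs to $\CompDOA$.

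\textbf{Case $X\in\CompDOA$.} Here Lemma~\ref{lem:VFinite} gives $\mathcal{V}(X)<\infty$, so the partial sums converge to $\mathcal{V}(X)$. By assumption $\mathbf{w}(\mathcal{R}(X,k))\to 0$. Letting $k\to\infty$ in the iterated identity gives $1-\mathbf{w}(X)=\exp(-\mathcal{V}(X))$, that is, $\mathbf{w}(X)=\mathcal{W}(X)$. Unlike Theorem~\ref{thm:LyapunovEqnUniqueness}, no telescoping of differences is needed, because the product form passes directly to the limit.

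\textbf{Case $X\notin\CompDOA$.} Here I will show $\mathcal{V}(X)=\infty$. The most direct route is to apply Theorem~\ref{thm:DOABoundary} to the constant sequence $X_k\equiv X$, which converges trivially; it gives $\mathcal{V}(X)=\infty$. Alternatively, suppose $\mathcal{V}(X)<\infty$. Then $\Psi(\mathcal{R}(X,k))\to0$, which by the lower bound in \eqref{eq:AlphaBounds} brings some reachable point within distance $r$ of $\mathcal{A}$. Concatenating inputs with the stabilizing input from Assumption~\ref{Assumptions}, as in the proof of Theorem~\ref{thm:DOASublevel}, then exhibits a point of $X$ lying in $\mathcal{D}_{\mathcal{A}}$, a contradiction. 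Hence $\mathcal{W}(X)=1$. Since $\mathbf{w}$ is bounded, the factor $1-\mathbf{w}(\mathcal{R}(X,k))$ is bounded uniformly in $k$, while the exponential factor tends to $0$. So the right-hand side of the iterated identity tends to $0$, giving $\mathbf{w}(X)=1=\mathcal{W}(X)$.

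The main obstacle is the second case. It is exactly where the boundedness hypothesis on $\mathbf{w}$ is needed, since without it $1-\mathbf{w}(\mathcal{R}(X,k))$ could blow up and offset the vanishing exponential. It is also where we need $\mathcal{V}=\infty$ off $\CompDOA$, which is supplied by Theorem~\ref{thm:DOABoundary}.
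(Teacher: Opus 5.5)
Your proof is correct, but it is organized differently from the paper's.

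The paper handles $\CompDOA$ in three steps. It first uses Lemma~\ref{lem:boundedness} to get $\mathbf{w}<1$. This makes $\mathbf{v}=-\ln(1-\mathbf{w})$ well defined and a solution of \eqref{eq:LyapunovEqn}. It then invokes the uniqueness result Theorem~\ref{thm:LyapunovEqnUniqueness}. Off $\CompDOA$, it works with the difference $\Delta_w=\mathbf{w}-\mathcal{W}$ of two solutions, which needs Theorem~\ref{thm:ZubovEqns} to know that $\mathcal{W}$ itself solves \eqref{eq:ZubovEqn1}. From a uniform lower bound $\theta>0$ on the distance of the reachable sets to $\mathcal{A}$ it derives $\abs{\Delta_w(\mathcal{R}(X,j))}\ge \exp(j\underline{\alpha}\theta^{\bar p})\abs{\Delta_w(X)}$, contradicting boundedness. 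It also treats an intermediate case (some $\mathcal{R}(X,j)$ meets $\mathcal{D}_{\mathcal{A}}$), which by concatenation of inputs is already contained in $\CompDOA$.

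Your single multiplicative identity for $1-\mathbf{w}$ replaces all of this:
\begin{itemize}
\item In the first case the product passes straight to the limit, giving $1-\mathbf{w}(X)=e^{-\mathcal{V}(X)}>0$ outright. So you need neither the logarithm, nor Lemma~\ref{lem:boundedness}, nor the Lyapunov uniqueness theorem, nor the tacit switch from \eqref{eq:ZubovEqn1} to \eqref{eq:ZubovEqn2}.
\item In the second case you compute $\mathbf{w}(X)=1$ directly, never using that $\mathcal{W}$ solves the equation. Divergence of the partial sums replaces the linear-in-$j$ bound through $\theta$.
\end{itemize}

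The price is that you must prove $\mathcal{V}(X)=\infty$ for $X\notin\CompDOA$. The paper's route does not state this, but its $\theta$ rests on the same fact: any reachable point within $r$ of $\mathcal{A}$ lies in $\mathcal{D}_{\mathcal{A}}$.

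For that step, rely on your direct stabilizability-and-concatenation argument, not on Theorem~\ref{thm:DOABoundary} applied to a constant sequence. The paper's proof of that theorem dismisses terms outside $\CompDOA$ as ``immediate''. That is exactly the fact you are trying to obtain.
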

\begin{proof}
   First note that the difference of two solutions, $\mathbf{w}_{1}$ and $\mathbf{w}_{2}$, to equation \eqref{eq:ZubovEqn1} satisfies, for $X\in \Comp$,
\begin{equation}\label{eq:difference1}
    \mathbf{w}_{1}(X)-\mathbf{w}_{2}(X)=(1-\xi(X))(\mathbf{w}_{1}(F(X))-\mathbf{w}_{2}(F(X))).
\end{equation}

Over  $\CompDOA$, $\mathbf{v}(\cdot)\defas -\ln(1-\mathbf{w}(\cdot))$ is well-defined due to Lemma \ref{lem:boundedness}, satisfying equation \eqref{eq:LyapunovEqn}. Then, using Theorem \ref{thm:LyapunovEqnUniqueness},  it follows that $\mathbf{v}(X)=\mathcal{V}(X)$. Hence $\mathbf{w}(X)=\mathcal{W}(X)$ for all   $X\in \CompDOA$.

Let $X\in \Comp$ be such that 
$\mathcal{R}(X,j)\cap \mathcal{ D}_{\mathcal{A}}\neq \emptyset
$
for some $j\in \mathbb{N}$.
Define
$
\Delta_{w}(\cdot)\defas \mathbf{w}(\cdot)-\mathcal{W}(\cdot).
$
Then, using equation \eqref{eq:difference1},  
$\Delta_{w}(\mathcal{R}(X,j-1))=(1-\xi(\mathcal{R}(X,j-1)))(\Delta_{w}(\mathcal{R}(X,j)))=0,
$
as $\Delta_{w}(\mathcal{R}(X,j))=0$, and by an inductive argument, we have 
$
\Delta_{w}(\mathcal{R}(X,k))=0~\forall k\in \intcc{0;j},
$
implying $\mathbf{w}(X)=\mathcal{W}(X)$. 

 Now, let $X\in \Comp$ be such that $\mathcal{R}(X,k)\cap \mathcal{D}_{\mathcal{A}}=\emptyset$ for all $k\in \mathbb{Z}_{+}$. Then,  $\lim_{k\rightarrow \infty}\inf_{y\in \mathcal{R}(X,k)}\dist(y,\mathcal{A})\neq 0$. 
Then it follows that there exists $\theta>0$ such that $\inf_{y\in \mathcal{R}(X,k)}\dist(y,\mathcal{A})\geq \theta~\forall k\in \mathbb{Z}_{+}$. Assume $\mathbf{w}(X)\neq \mathcal{W}(X)$. Note the difference of two solutions, $\mathbf{w}_{1}$ and $\mathbf{w}_{2}$, to equation \eqref{eq:ZubovEqn1} over $\mathcal{X}$ (which are also solutions to \eqref{eq:ZubovEqn2}) satisfies, for $Y\in \Comp$,
$\mathbf{w}_{1}(F(Y))-\mathbf{w}_{2}(F(Y))=(1+\beta(Y))(\mathbf{w}_{1}(Y)-\mathbf{w}_{2}(Y))$. It then follows that, for all $j\in \mathbb{N}$,   
$
\Delta_{w}(\mathcal{R}(X,j))=\prod_{k=0}^{j-1}(1+\beta(\mathcal{R}(X,k)))\Delta_{w}(X). 
$
For all $k \in \intcc{0;j-1}$, we have 
$1+\beta(\mathcal{R}(X,k))=\exp(\Psi(\mathcal{R}(X,k)))\geq \exp(\underline{\alpha}\inf_{y\in \mathcal{R}(X,k)}\dist(y,\mathcal{A})^{\bar{p}})\geq  \exp(\underline{\alpha}\theta^{\bar{p}})$. Hence,
  $
\abs{\Delta_{w}(\mathcal{R}(X,j))}\geq \exp(j\underline{\alpha}\theta^{\bar{p}})\abs{\Delta_{w}(X)}.
$
As $\lim_{ j\rightarrow\infty}\exp(j\underline{\alpha}\theta^{\bar{p}})=\infty$, it then follows that for each $K>0$, there exists a compact $Z\in \Comp$ (which corresponds to the image of $\mathcal{R}(X,\cdot)$ at some time step) such that $\abs{w(Z)}> K-1$. Hence, $\mathbf{w}$ is unbounded over $\Comp$, a contradiction.
\end{proof}
We conclude this section with the following important result, showing that the value functions $\mathcal{V}$ and $\mathcal{W}$ are, in fact, control Lyapunov functions under certain sufficient conditions.
\begin{theorem}\label{thm:radial-sufficient-condition}
Let \(\rho:\mathbb{R}^{n}\to [0,\infty)\) be continuous, and let \(\phi:[0,\infty)\to [0,\infty)\) be continuous and nondecreasing. Besides the aforementioned properties of $\alpha$ (equation \eqref{eq:AlphaBounds}),  assume that it has the form:
\[
\alpha(x)=\phi(\rho(x)), \qquad x\in \mathbb{R}^{n},
\]
Additionally, assume that there exists a nondecreasing function \(\Gamma:[0,\infty)\to [0,\infty)\) such that
\[
\inf_{u\in U}\rho(f(x,u))=\Gamma(\rho(x))
\qquad \forall x\in \mathbb{R}^{n}.
\]
 Then, $\mathcal{V}$ satisfies 
$
\inf_{u\in U} \mathcal{V}(\{f(x,u)\})<\mathcal{V}(\{x\}),\quad  x\in \mathcal{D}_{\mathcal{A}}\setminus \mathcal{A}$,  and $\mathcal{W}$ satisfies
$
\inf_{u\in U}\mathcal{W}(\{f(x,u)\})< \mathcal{W}(\{x\}),
\quad  x\in \mathcal{D}_{\mathcal{A}}\setminus \mathcal{A}$.

\end{theorem}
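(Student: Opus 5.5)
The plan is to show that, under the radial structure $\alpha=\phi\circ\rho$ together with the $\Gamma$-condition, the set-valued cost $\Psi(\mathcal{R}(\{x\},k))$ depends only on the scalar iterates $\Gamma^k(\rho(x))$. Once that holds, the cost produced by the whole reachable set is matched by a single well-chosen successor state $f(x,u^\ast)$. The strict decrease will then follow from the Bellman equation \eqref{eq:LyapunovEqn} and Theorem~\ref{thm:DOASublevel}.

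\emph{Step 1 (pointwise minimizers).} Fix $x\in\mathbb{R}^n$. Since $U$ is compact and $u\mapsto\rho(f(x,u))$ is continuous, the infimum $\inf_{u\in U}\rho(f(x,u))$ is attained at some $u^\ast\in U$. Hence $\rho(f(x,u^\ast))=\Gamma(\rho(x))$.

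\emph{Step 2 (scalar reduction of reachable sets).} For any compact $Z$, I claim
\[
\min_{y\in F(Z)}\rho(y)=\Gamma\Bigl(\min_{z\in Z}\rho(z)\Bigr).
\]
To see this, write $\min_{y\in F(Z)}\rho(y)=\min_{z\in Z}\inf_{u\in U}\rho(f(z,u))=\min_{z\in Z}\Gamma(\rho(z))$. Let $z^\ast$ minimize $\rho$ on $Z$, which exists because $Z$ is compact and $\rho$ is continuous. Since $\Gamma$ is nondecreasing, $\Gamma(\rho(z^\ast))\le\Gamma(\rho(z))$ for every $z\in Z$, which gives the identity. Note that this uses only monotonicity of $\Gamma$, with compactness supplying the attained minimum; continuity of $\Gamma$ is not needed.

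Iterating the claim with Lemma~\ref{lem:SemiGroup} gives $\min_{y\in\mathcal{R}(Z,k)}\rho(y)=\Gamma^k(\min_{Z}\rho)$. Since $\phi$ is continuous and nondecreasing, $\inf_{y}\phi(\rho(y))=\phi(\min_y\rho(y))$, and therefore
\[
\Psi(\mathcal{R}(Z,k))=\phi\bigl(\Gamma^k(\min_Z\rho)\bigr).
\]
Consequently $\mathcal{V}(Z)=\sum_{k\ge0}\phi(\Gamma^k(\min_Z\rho))$. In particular, $\mathcal{V}$ depends on $Z$ only through $\min_Z\rho$.

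\emph{Step 3 (conclusion).} By Steps 1 and 2, $\min_{F(\{x\})}\rho=\Gamma(\rho(x))=\rho(f(x,u^\ast))$. Since $\mathcal{V}$ depends only on $\min\rho$, this gives $\mathcal{V}(\{f(x,u^\ast)\})=\mathcal{V}(F(\{x\}))$. Theorem~\ref{Thm:LyapunovEquation} then yields
\[
\mathcal{V}(\{f(x,u^\ast)\})=\mathcal{V}(\{x\})-\alpha(x).
\]
Now take $x\in\mathcal{D}_{\mathcal{A}}\setminus\mathcal{A}$. Theorem~\ref{thm:DOASublevel} gives $\mathcal{V}(\{x\})<\infty$. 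Since $x\notin\mathcal{A}$ and $\mathcal{A}$ is closed, $\dist(x,\mathcal{A})>0$, so $\alpha(x)\ge\underline{\alpha}\dist(x,\mathcal{A})^{\bar p}>0$. Hence
\[
\inf_{u\in U}\mathcal{V}(\{f(x,u)\})\le\mathcal{V}(\{f(x,u^\ast)\})<\mathcal{V}(\{x\}).
\]
For $\mathcal{W}$, I will use that $s\mapsto1-e^{-s}$ is strictly increasing on $[0,\infty)$ and both values are finite. This gives $\mathcal{W}(\{f(x,u^\ast)\})<\mathcal{W}(\{x\})$, and the infimum statement follows.

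The main obstacle is Step 2. One must check that the infimum over the reachable set commutes with $\Gamma$ and $\phi$ using only monotonicity, which requires attained minima and hence compactness of the reachable sets (Lemma~\ref{lem:SemiGroup}). Once that identity is in place, the remaining steps are direct applications of Theorems~\ref{Thm:LyapunovEquation} and~\ref{thm:DOASublevel}.
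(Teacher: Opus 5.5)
Your proof is correct and rests on the same core fact as the paper's. The set-valued costs collapse to scalar iterates, $\Psi(\mathcal{R}(\{x\},k))=\phi(\Gamma^k(\rho(x)))$, and a single minimizer $u^\ast$ of $u\mapsto\rho(f(x,u))$ is optimal for every $k$ simultaneously. The organization differs, however.

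The paper works pointwise with $m_k(x)=\Psi(\mathcal{R}(\{x\},k))$. It uses the first-step recursion $m_{k+1}(x)=\inf_{u\in U}m_k(f(x,u))$ and proves $m_k=\phi_k\circ\rho$ with $\phi_{k+1}=\phi_k\circ\Gamma$ by induction. It then interchanges $\inf_{u}$ with $\sum_k$, using the $k$-independent minimizer, and reads off $\inf_{u\in U}\mathcal{V}(\{f(x,u)\})=\mathcal{V}(\{x\})-\alpha(x)$ directly from the series.

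You instead prove a last-step identity $\min_{F(Z)}\rho=\Gamma(\min_Z\rho)$ for arbitrary compact $Z$. From it you deduce that $\mathcal{V}(Z)$ depends only on $\min_Z\rho$, so $\mathcal{V}(\{f(x,u^\ast)\})=\mathcal{V}(F(\{x\}))$, and you let the Bellman equation \eqref{eq:LyapunovEqn} supply the decrement $\alpha(x)$.

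Your route keeps the regularity bookkeeping cleaner. It shows that only monotonicity of $\phi$ and $\Gamma$, plus attainment of minima on compact sets, is needed. The paper's induction instead asserts that $\phi_{k+1}=\phi_k\circ\Gamma$ is continuous because $\Gamma$ is, and continuity of $\Gamma$ is not among the hypotheses. This is harmless there, since attainment makes continuity unnecessary. You also state explicitly why $\alpha(x)>0$ and $\mathcal{V}(\{x\})<\infty$, which the paper leaves implicit.

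What the paper's version gives that yours does not state is equality for the infimum; you obtain only the upper bound. Equality follows at once from Lemma~\ref{lem:Monotonicity}: since $\{f(x,u)\}\subseteq F(\{x\})$, you have $\mathcal{V}(\{f(x,u)\})\ge\mathcal{V}(F(\{x\}))$ for every $u\in U$.
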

\begin{proof}
We prove the result for $\mathcal{V}$. The statement for $\mathcal{W}$ follows from the monotonicity of the map $s \mapsto 1 - e^{-s}$. For convenience, define $v(x) \defas \mathcal{V}(\{x\})$ for $x\in \mathbb{R}^n$. Then
\[
v(x) = \sum_{k=0}^{\infty} m_k(x),
\quad \text{where} \quad
m_k(x) \defas \inf_{y \in \mathcal{R}(\{x\},k)} \alpha(y),
\]
and, in particular, $m_0(x)=\alpha(x)$. 
We first show that, for every \(x\in \mathbb{R}^{n}\) and every \(k\in \mathbb{Z}_{+}\),
$$
m_{k+1}(x)=\inf_{u\in U} m_k(f(x,u)).
$$
For $k=0$, we have
$
\inf_{u\in U} m_0(f(x,u))= \inf_{u\in U} \alpha(f(x,u))= \inf_{y \in \mathcal{R}(\{x\},1)} \alpha(y)
= m_1(x)$,
where we used the fact that $\mathcal{R}(\{x\},1)=\cup_{u\in U}\{f(x,u)\}$. Assume the claim holds for some $k \ge 0$. Then
$
m_{k+2}(x)=\inf_{y \in \mathcal{R}(\{x\},k+2)} \alpha(y).
$
Using the fact that
$
\mathcal{R}(\{x\},k+2)=\bigcup_{u\in U}\mathcal{R}(\{f(x,u)\},k+1)
$, and property of the infimum over unions, we get
$m_{k+2}(x)= \inf_{y \in \mathcal{R}(\{x\},k+2)} \alpha(y)= \inf_{u\in U} \inf_{y\in \mathcal{R}(\{f(x,u)\},k+1)} \alpha(y)= \inf_{u\in U} m_{k+1}(f(x,u))$. We now show that, for each $k\in \mathbb{Z}_{+}$:
$$
m_{k}(x)=\phi_{k}(\rho(x))
\qquad \forall x\in \mathbb{R}^{n}.
$$
for continuous and  nondecreasing \(\phi_k:[0,\infty)\to [0,\infty)\). We argue by induction on \(k\). For \(k=0\), since \(\mathcal{R}(\{x\},0)=\{x\}\),
$
m_0(x)=\inf_{y\in \{x\}} \alpha(y)=\alpha(x)=\phi(\rho(x))$. Hence the claim holds with \(\phi_0=\phi\), which is continuous and nondecreasing by assumption. Suppose now that, for some \(k\in \mathbb{Z}_{+}\), there exists a continuous nondecreasing function \(\phi_k:[0,\infty)\to [0,\infty)\) such that $m_k(x)=\phi_k(\rho(x))~ \forall x\in \mathbb{R}^{n}$.
 Then,
$m_{k+1}(x)=\inf_{u\in U}\phi_k(\rho(f(x,u)))$. Since \(\phi_k\) is nondecreasing, for every fixed \(x\in \mathbb{R}^{n}\),
\[
\inf_{u\in U}\phi_k(\rho(f(x,u)))
=
\phi_k\!\left(\inf_{u\in U}\rho(f(x,u))\right).
\]
By the assumption on \(\Gamma\),
$
m_{k+1}(x)=\phi_k(\Gamma(\rho(x))).
$
Define
$
\phi_{k+1}(r):=\phi_k(\Gamma(r)),~ r\in \mathbb{R}_{+}$. Since both \(\phi_k\) and \(\Gamma\) are continuous and nondecreasing, \(\phi_{k+1}\) is also continuous and nondecreasing, and $m_{k+1}(x)=\phi_{k+1}(\rho(x))
\qquad \forall x\in \mathbb{R}^{n}$. This completes the induction, and therefore
$
m_k(x)=\phi_k(\rho(x)),~ \forall x\in \mathbb{R}^{n},\ \forall k\in \mathbb{Z}_{+},
$
with \(\phi_{k+1}=\phi_k\circ \Gamma\). 

It follows, using the monotonicity of $\phi_{k}$ that, for each $x\in \mathbb{R}^{n}$, there exists $u_{x}\in U$ such that
$
m_{k}(f(x,u_{x}))=\inf_{u\in U}m_{k}(f(x,u))=\phi_{k}(\inf_{u\in U}\rho(f(x,u))),
$
and as the minimizer $u_{x}$ is independent of $k$, we have: 
\[
\inf_{u\in U}\sum_{k=0}^{\infty} m_k(f(x,u))
=
\sum_{k=0}^{\infty} \inf_{u\in U} m_k(f(x,u)).
\]
Finally, 
    $\inf_{u\in U}v(f(x,u))=\inf_{u\in U}\sum_{k=0}^{\infty} m_k(f(x,u))=
\sum_{k=0}^{\infty} \inf_{u\in U} m_k(f(x,u))= \sum_{k=0}^{\infty} \inf_{u\in U} m_{k+1}(x)= v(x)-\alpha(x)<v(x)$.
\end{proof}

\section{DOS Estimation}\label{sec:VFApproximation}

In the previous sections, we introduced value functions defined on metric spaces of compact sets that characterize the DOS. We now leverage these value functions to obtain constructive and practically implementable estimates of the DOS via NN learning.

\subsection{Finite-Horizon Approximation of the Value Functions}
\label{sec:ReachSetApproximation}

Within the proposed learning framework, it is necessary to evaluate the value functions $\mathcal{V}$ and $\mathcal{W}$, at least approximately, over singleton sets. Let $N_s \in \mathbb{N}$ denote a truncation horizon. We define the finite-horizon approximation
$\mathcal{V}_{N_s}(X) \defas 
\sum_{k=0}^{N_s} \Psi\big(\mathcal{R}(X,k)\big),~ X \in \Comp$. In general, the reachable sets $\mathcal{R}(X,k)$ cannot be computed exactly. We therefore consider approximations $\tilde{\mathcal{R}}(X,k)$ such that
$
\mathcal{R}(X,k) \approx \tilde{\mathcal{R}}(X,k), \quad k\in \intcc{0;N_s}.
$
A variety of reachable-set approximation techniques for discrete-time systems are available in the literature (see, e.g., \cite{kuhn1998rigorously,alamo2005guaranteed,yang2020accurate}). These methods typically involve a trade-off between accuracy and computational complexity. Moreover, the chosen approximation must enable tractable evaluation of the function $\Psi$ in \eqref{eq:Psi}. In our setting, it suffices to approximate reachable sets for singleton initial conditions $X=\{x\}$ with $x\in\mathbb{R}^n$.

In Section~\ref{sec:NumericalExamples}, we adopt a trajectory-based approximation that balances accuracy and computational efficiency. Let $N_{\mathrm{traj}} \in \mathbb{N}$ denote the number of sampled input trajectories, and let $\pi^{(j)}:\mathbb{Z}_+ \to U$, $j\in \intcc{1;N_{\mathrm{traj}}}$, be randomly generated input signals. We define
$\tilde{\mathcal{R}}(\{x\},k)
=
\bigcup_{j=1}^{N_{\mathrm{traj}}}
\left\{ \varphi_{x}^{\pi^{(j)}}(k) \right\}$, i.e., the reachable set at time $k$ is approximated by a finite collection of sampled trajectories. This construction renders the evaluation of $\Psi$ computationally tractable. The approximation accuracy improves as $N_{\mathrm{traj}}$ increases. Based on these approximations, we define
$\tilde{\mathcal{V}}_{N_s}(X)
\defas 
\sum_{k=0}^{N_s} \Psi\big(\tilde{\mathcal{R}}(X,k)\big)$, and $\tilde{\mathcal{W}}_{N_s}(X)
\defas 
1 - \exp\!\big(-\tilde{\mathcal{V}}_{N_s}(X)\big)$, $X \in \Comp$.

\subsection{Physics-Informed and Set-Based Neural Network Learning}
\label{sec:PINNLearning}

Since evaluation of the value function $\mathcal{W}$ over singleton sets is sufficient for DOS characterization  (Theorem~\ref{thm:DOASublevel}), we seek to approximate the mapping $x \mapsto \mathcal{W}(\{x\})$ for $x \in \mathbb{R}^n$. The resulting approximation can then be used to estimate the DOS.

We propose a physics-informed learning framework that incorporates the set-based Bellman (Zubov-type) equation~\eqref{eq:ZubovEqn1}. A key challenge in embedding \eqref{eq:ZubovEqn1} into the training process is the set-valued nature of the term $F(\{x\})$, which arises due to the presence of input uncertainty.

To address this, we impose the following structural assumption.

\begin{assumption}\label{assumption:embedding}
Define
\[
\mathcal{S}
:=
\big\{ \{x\} : x \in \mathbb{R}^n \big\}
\;\cup\;
\big\{ F(\{x\}) : x \in \mathbb{R}^n \big\}.
\]
Assume there exists a mapping
$
\mathcal{T} : \mathcal{S} \to \mathbb{R}^L,
$
for some fixed $L \in \mathbb{N}$, such that $\mathcal{T}$ is injective on $\mathcal{S}$. In particular, each singleton set $\{x\}$ and each reachable set $F(\{x\})$ admits a unique finite-dimensional representation.

This assumption is satisfied when $\mathcal{S}$ is represented using a parameterized class of sets with fixed complexity (e.g., polytopes, zonotopes, or fixed-template sublevel sets), as discussed thoroughly in Remark~6 of \cite{serry2026safe}.
\end{assumption}

Let $\tilde{\omega}_{\mathrm{nn},\theta} : \mathbb{R}^L \to \mathbb{R}$ be a feedforward neural network with a fixed architecture, parameterized by weights and biases collected in $\theta$. We aim to determine $\theta$ such that the composition $\tilde{\omega}_{\mathrm{nn},\theta} \circ \mathcal{T}$
approximates the value function over a compact learning domain $\mathbb{X}_{\ell} \subset \mathbb{R}^n$ satisfying $\mathcal{A} \subseteq \mathbb{X}_{\ell}$.

The training problem is formulated as the minimization of the composite loss
$
\mathcal{L}(\theta)
=
\lambda_{\mathrm{d}} \mathcal{L}_{\mathrm{d}}(\theta)
+
\lambda_{\mathrm{pi}} \mathcal{L}_{\mathrm{pi}}(\theta),
$
where $\lambda_{\mathrm{d}}, \lambda_{\mathrm{pi}} > 0$ are weighting coefficients. The data-driven and physics-informed losses are given by
$
\mathcal{L}_{\mathrm{d}}(\theta)
=
\frac{1}{N_{\mathrm{d}}}
\sum_{i=1}^{N_{\mathrm{d}}}
J_{\mathrm{d},\theta}\bigl(\{z_{\mathrm{d}}^{(i)}\}\bigr),
$
$
\mathcal{L}_{\mathrm{pi}}(\theta)
=
\frac{1}{N_{\mathrm{pi}}}
\sum_{i=1}^{N_{\mathrm{pi}}}
J_{\mathrm{pi},\theta}\bigl(\{z_{\mathrm{pi}}^{(i)}\}\bigr).
$ The data-driven residual is defined as
\[
J_{\mathrm{d},\theta}(\{x\})
=
\Big(
\tilde{\mathcal{W}}_{N_s}(\{x\})
-
\tilde{\omega}_{\mathrm{nn},\theta}\bigl(\mathcal{T}(\{x\})\bigr)
\Big)^2,
\]
which penalizes the discrepancy between the network output and the finite-horizon approximation $\tilde{\mathcal{W}}_{N_s}$ of the value function $\mathcal{W}$. The physics-informed residual is defined by
\begin{align*}
J_{\mathrm{pi},\theta}(\{x\})
&=
\Big(
\tilde{\omega}_{\mathrm{nn},\theta}\bigl(\mathcal{T}(\{x\})\bigr)
-
\tilde{\omega}_{\mathrm{nn},\theta}\bigl(\mathcal{T}(F(\{x\}))\bigr) \\
&\hspace{1.2cm}
-
\xi(\{x\})
\bigl(
1
-
\tilde{\omega}_{\mathrm{nn},\theta}\bigl(\mathcal{T}(F(\{x\}))\bigr)
\bigr)
\Big)^2,
\end{align*}
which enforces the Bellman-type functional equation~\eqref{eq:ZubovEqn1} in a residual minimization sense.

The collocation points 
$\{z_{\mathrm{d}}^{(i)}\}_{i=1}^{N_{\mathrm{d}}} \subset \mathbb{X}_{\ell}$ 
and 
$\{z_{\mathrm{pi}}^{(i)}\}_{i=1}^{N_{\mathrm{pi}}} \subset \mathbb{X}_{\ell}$ 
are independently sampled from $\mathbb{X}_{\ell}$. The former are used to fit the approximate value function, while the latter enforce the governing functional equation. Finally, we define the state-space function $\omega_{\mathrm{nn},\theta}(x)
\defas
\tilde{\omega}_{\mathrm{nn},\theta}\bigl(\mathcal{T}(\{x\})\bigr),
~ x \in \mathbb{R}^n$. The function $\omega_{\mathrm{nn}}$ (index $\theta$ is dropped for convenience) is thus obtained as the composition of the embedding $\mathcal{T}$ with the trained neural network. It serves as a NN value functions used for  DOS estimation. In Section \ref{sec:NumericalExamples}, we illustrate the extraction of the NN DOS estimation and stabilizing controllers.

\section{Numerical Examples}
\label{sec:NumericalExamples}
In this section, we illustrate the effectiveness of the proposed framework through two academic examples. In both cases, th CIS is given by $\mathcal{A}=\{\mathbf{0}\}$, and we consider a domain $\mathbb{X}$ for estimation and learning such that $\mathcal{A}\subseteq \mathbb{X}$. In both cases, the origin is an unstable equilibrium for the uncontrolled system ($u=\mathbf{0}$), while the systems are locally stabilizable under suitable feedback. For both examples, the values $F(\{x\})$, for $x \in \mathbb{R}^{n}$, are hyper-rectangles, so we use the hyper-interval-based embedding in Remark 6 in \cite{serry2026safe}. 

All computations are carried out in \textsc{Matlab} on a 13th Gen Intel(R) Core(TM) i7-1355U (1.70~GHz) laptop, with the help of the reachability toolbox CORA~\cite{althoff2015introduction}. For both examples, the learned NNs employs two hidden layers. Each hidden layer contains 20 neurons for the 2D example and 30 neurons for the 3D example. The training parameters are chosen identically for both cases as follows: $N_{\mathrm{step}} = 30$ and $N_{\mathrm{traj}} = 5000$, which are used for reachable set approximations; $N_{d} = 5000$ and $N_{\mathrm{pi}} = 10000$, representing the number of sampled points and collocation points used during training, respectively. The loss function weights are set to $\lambda_{\mathrm{d}} = 0.1$ and $\lambda_{\mathrm{pi}} = 1$.

\subsection{DOS Estimation and Stabilizing Control}
For each example, we first design a stabilizing linear state-feedback controller of the form $\Pi_{\mathrm{linear}}(x)=Kx$. Using quadratic Lyapunov analysis, we compute an ellipsoidal region of attraction of the closed-loop system, given by the sublevel set $\mathbb{E}_{c_1}=\{x\in \mathbb{X} : \nu(x)\leq c_{1}\}$, where $\nu(x)=x^{\top}Px$ and $P\succ 0$. The controller $\Pi_{\mathrm{linear}}$ is chosen such that the input constraints are satisfied within this region, i.e., $x\in \mathbb{E}_{c_{1}} \Rightarrow \Pi_{\mathrm{linear}}(x)\in U$.

Next, we employ a (non-formal) grid-based enlargement procedure to expand this initial estimate and obtain a larger ellipsoidal set $\mathbb{E}_{c_2}=\{x\in \mathbb{X} : \nu(x)\leq c_{2}\}$, with $c_2 \geq c_1$, such that for all $x\in \mathbb{X}$ satisfying $c_{1}\leq \nu(x)\leq c_{2}$, there exists $u\in U$ for which $\nu(f(x,u)) - \nu(x) < 0$ and $f(x,u)\in \mathbb{X}$.

Subsequently, we learn a neural network approximation $\omega_{\mathrm{nn}}$ of the value function over $\mathbb{X}$ using the physics-informed framework described in Section~\ref{sec:PINNLearning}. When appropriate, we set $\alpha = c\,\nu$ for some $c>0$, so that the sublevel sets of $\omega_{\mathrm{nn}}$ near the origin coincide with those of $\nu$. Since $\nu$ provides local stabilization guarantees, this choice enforces consistency near the origin while enabling larger certified estimates away from it.

A direct approximation of the DOS can then be defined as $\mathbb{W}_{r}=\{x\in \mathbb{X} : \omega_{\mathrm{nn}}(x)\leq r\}, ~ r<1$. While such a set may provide a good approximation of the DOS when $\omega_{\mathrm{nn}}$ is accurate, it does not, in general, yield a constructive stabilizing policy.

Although the exact value function $\mathcal{W}$ is guaranteed to be a control Lyapunov function only under the sufficient conditions of Theorem~\ref{thm:radial-sufficient-condition}, it nevertheless serves as a strong candidate for stabilization. Motivated by this, we compute, via grid-based search, two scalars $\omega_{1}$ and $\omega_{2}$ with $\omega_{1}<\omega_{2}$ such that: (i) for all $x\in \mathbb{X}$ with $\omega_{\mathrm{nn}}(x)\leq \omega_{1}$, we have $\nu(x)\leq c_{2}$; and (ii) for all $x\in \mathbb{X}$ with $\omega_{1}\leq \omega_{\mathrm{nn}}(x)\leq \omega_{2}$, there exists $u\in U$ such that $\omega_{\mathrm{nn}}(f(x,u)) - \omega_{\mathrm{nn}}(x) < 0$ and $f(x,u)\in \mathbb{X}$. We then define the neural network estimate of the CIS as $\mathcal{D}_{\mathrm{nn}} = \{x\in \mathbb{X} : \omega_{\mathrm{nn}}(x)\leq \omega_{2}\}$.

Finally, we construct a stabilizing controller $\Pi:\mathcal{D}_{\mathrm{nn}}\to U$ satisfying:
\[
\Pi(x)\in
\begin{cases}
\{Kx\}, & {\omega_{\mathrm{nn}}(x)\leq \omega_{1}\wedge  \nu(x)\leq c_{1}}, \\[4pt]
\mathcal{U}_{\nu}(x), 
& {\omega_{\mathrm{nn}}(x)\leq \omega_{1}\wedge  c_{1}<\nu(x)\leq c_{2}}, \\[6pt]
\mathcal{U}_{\omega}(x), 
&{ \omega_{1}<\omega_{\mathrm{nn}}(x)\leq \omega_{2}},
\end{cases}
\]
where $\mathcal{U}_{\nu}(x)\defas \{u\in U \mid \nu(f(x,u))<\nu(x), f(x,u)\in \mathbb{X}\}$, and $\mathcal{U}_{\omega}(x)\defas \{u\in U \mid \omega_{\mathrm{nn}}(f(x,u))<\omega_{\mathrm{nn}}(x), f(x,u)\in \mathbb{X}\}$.

For each example, we plot the largest ellipsoidal estimate $\mathbb{E}_{c_{2}}$, and the neural network estimate $\mathcal{D}_{\mathrm{nn}}$. In addition, we generate closed-loop trajectories with initial conditions in $\mathcal{D}_{\mathrm{nn}}$ and plot their state  2-norms as functions of time and associated control input values.

\subsection{System Equations}

We consider two benchmark systems. The first system is two-dimensional and is given by
\begin{equation}\label{eq:2Dexample}
f(x,u)=
\begin{pmatrix}
x_{1} + 0.1 x_{2} \\
x_{2} + 0.1 \left( x_{1} + x_{1}^3 + x_{2} + u \right)
\end{pmatrix},
\end{equation}
with input set $U=\intcc{-0.5,0.5}$ and learning/estimation domain $\mathbb{X}=\intcc{-1,1}^2$. The CPU time for data generation and  learning  is approximately 5 minutes.

The second system is three-dimensional and is given by
\begin{equation}\label{eq:3Dexample}
f(x,u)=
\begin{pmatrix}
x_{1} + 0.1 x_{2} \\
x_{2} + 0.1 x_{3} \\
x_{3} + 0.1 \left( x_{1} + x_{1}^3 - x_{2} + u \right)
\end{pmatrix},
\end{equation}
with input set $U=\intcc{-1,1}$ and learning/estimation domain $\mathbb{X}=\intcc{-1,1}^3$. For this system, the CPU time for data generation and  learning  is approximately 15 minutes.
\subsection{Results}
\begin{figure*}
    \centering
    \includegraphics[width=0.32\linewidth]{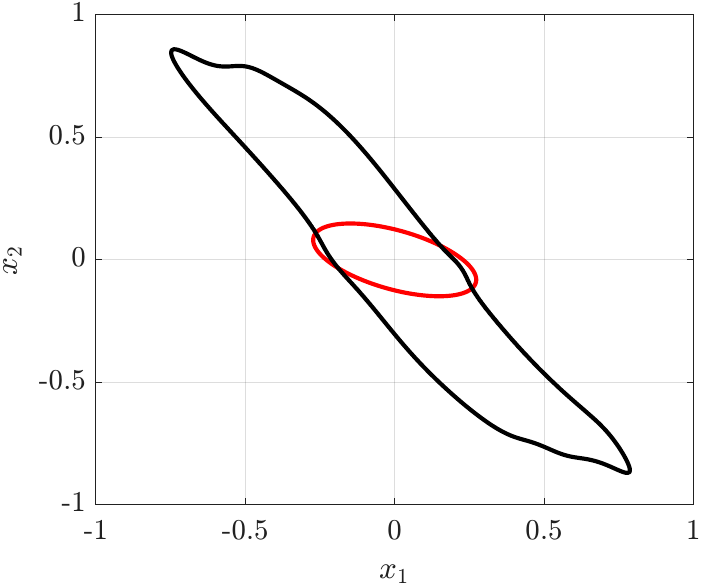}
\includegraphics[width=0.32\linewidth]{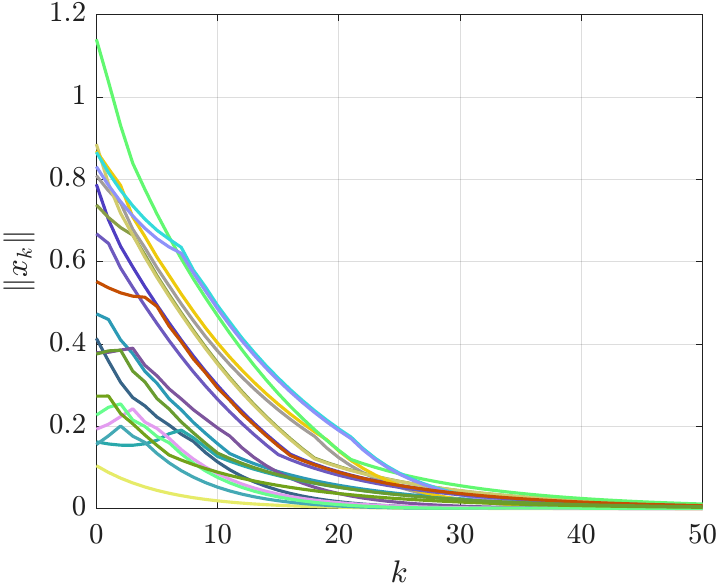}
\includegraphics[width=0.32\linewidth]{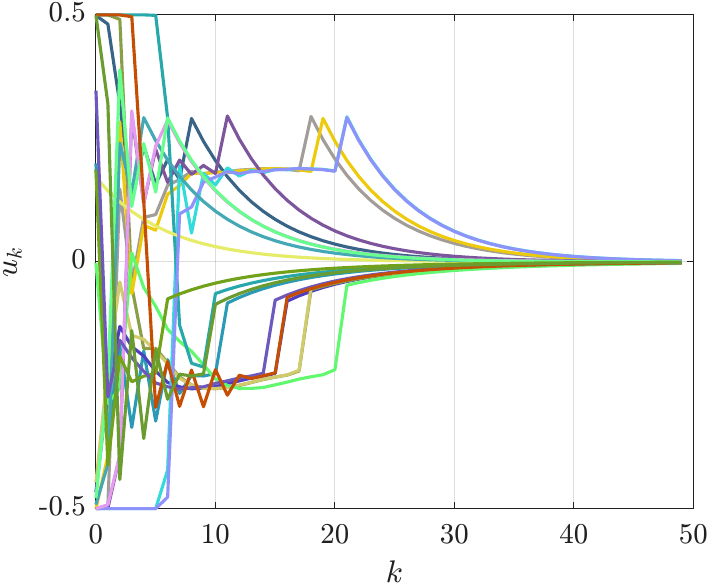}
\caption{Estimated domains of stabilization (left), where the NN-based estimate is shown in black and the ellipsoidal estimate in red; norm profiles of the corresponding closed-loop trajectories initialized in $\mathcal{D}_{\mathrm{nn}}$ under the controller $\Pi$ (middle); and the corresponding control inputs (right) for system \eqref{eq:2Dexample}.}
    \label{fig:2D}
\end{figure*}
\begin{figure*}
    \centering
\includegraphics[width=0.32\linewidth]{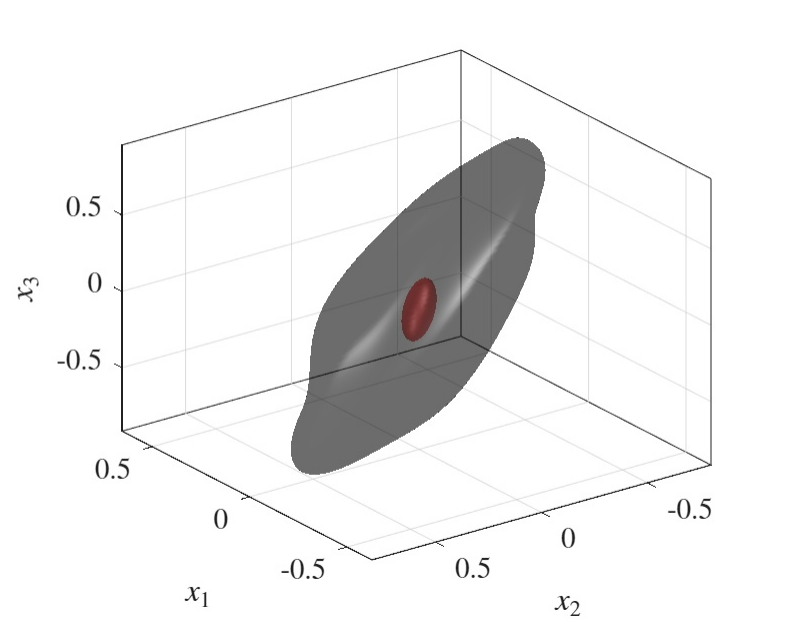}
\includegraphics[width=0.32\linewidth]{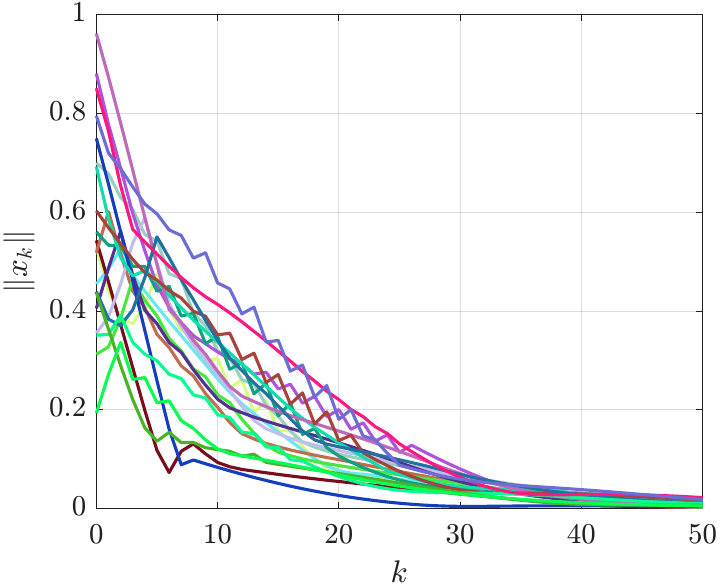}
\includegraphics[width=0.32\linewidth]{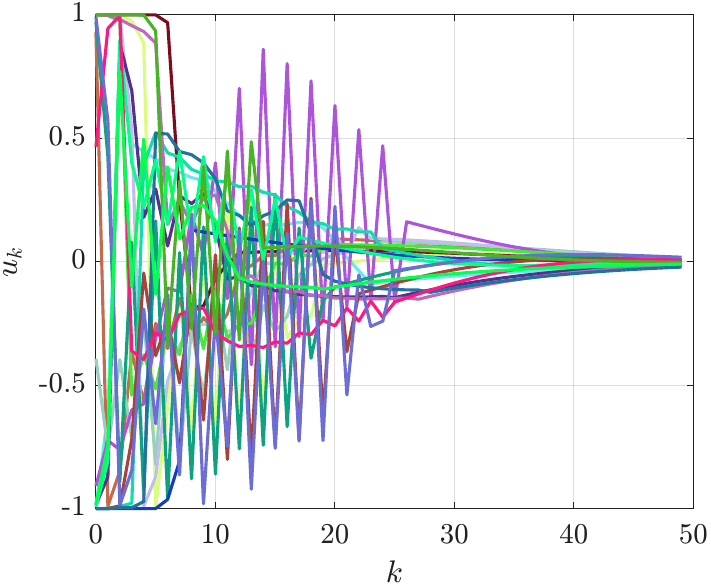}
\caption{Estimated domains of stabilization (left), where the NN-based estimate is shown in black and the ellipsoidal estimate in red; norm profiles of the corresponding closed-loop trajectories initialized in $\mathcal{D}_{\mathrm{nn}}$ under the controller $\Pi$ (middle); and the corresponding control inputs (right) for system \eqref{eq:3Dexample}.}
    \label{fig:3D}
\end{figure*}
Figures \ref{fig:2D} and \ref{fig:3D} demonstrate that the proposed NN framework yields significantly less conservative estimates of the DOS compared to quadratic Lyapunov-based approaches. We note that, in computing the neural estimate $\mathcal{D}_{\mathrm{nn}}$ for both systems, $\omega_{2}\approx 0.97$, i.e., close to $1$, indicating that the estimates closely match the theoretical DOSs, defined as the $1$-sublevel sets of the value functions (Theorem \ref{thm:DOASublevel}). Moreover, the figures illustrate the effectiveness of the controller $\Pi$, extracted from the learned value function, in steering system trajectories to the origin while satisfying the input constraints. These results highlight the promise of the proposed framework for control synthesis in more application-relevant settings, which will be the focus of future work.
\section{Conclusion}
\label{sec:Conclusion}
In this paper, we proposed a novel framework for the estimation of DOSs for input-constrained controlled discrete-time systems. The approach is built on value functions defined on metric spaces of compact sets, which provide a characterization of the DOS. We established key properties of these value functions and derived the associated Bellman-type functional equations. Leveraging this characterization, we developed a physics-informed NN training framework to learn the value functions directly from the governing equations. The effectiveness of the proposed methodology in estimating DOSs and synthesizing stabilizing controllers was demonstrated through two numerical examples.

Future work will focus on extending this framework to obtain certified (i.e., formally verified) estimates of the DOS, along with NN-based controllers that admit rigorous correctness guarantees. In particular, we aim to integrate neural network verification tools, such as $\alpha,\beta$-CROWN and SMT-based solvers (e.g., \texttt{dReal}), to certify the learned value functions and the resulting closed-loop behavior. Additionally, we will investigate extensions to systems with state constraints and disturbances. Finally, we will seek to relax the sufficient conditions ensuring that the proposed value functions qualify as control Lyapunov functions, and to derive practically verifiable conditions that can be implemented in computational settings.
\bibliographystyle{ieeetr}

\end{document}